\documentclass{article}
\usepackage[english]{babel}
\usepackage{latexsym,amssymb,amsmath,amsfonts,epsfig,color,dsfont,mathtools}

 \usepackage{braket,graphicx,authblk,colonequals,algorithm,algpseudocode,ifthen,framed, comment}
 \usepackage{algpseudocode}
\usepackage{float,array}
\usepackage{xcolor}
\usepackage{tikz}
\usetikzlibrary{quantikz}
\usepackage[utf8]{inputenc}
\usepackage[english]{babel}
\usepackage{amsmath,amsthm}
\usepackage{cite}
\usepackage[colorlinks=true,linkcolor=blue,urlcolor=blue,citecolor=blue]{hyperref}
\usepackage[nameinlink,capitalise]{cleveref}
\floatname{algorithm}{Algorithm}
\usepackage[inline]{enumitem}
\newlist{inlinelist}{enumerate*}{1}
\setlist[inlinelist]{label=(\roman*)}

\newif\ifSUBMISSION

\SUBMISSIONfalse

\ifSUBMISSION

\newcommand{\rgnote}[1]{}

\newcommand{\anote}[1]{}

\newcommand{\Anne}[1]{}

\else

\newcommand{\anote}[1]{\textcolor{purple}{\small (Anne: #1)}}

\newcommand{\Anne}[1]{{\color{red}\sc note}\footnote{\color{red} \underline{{\sc anne}:} #1}}

\fi

\newcommand{\bit}[1]{\{0,1\}^{#1}}

\theoremstyle{definition}
\newtheorem{definition}{Definition}[section]
\newtheorem{remark}[definition]{Remark}
\newtheorem{note}[definition]{Note}
\newtheorem{theorem}[definition]{Theorem}
\newtheorem{claim}[definition]{Claim}
\newtheorem{lemma}[definition]{Lemma}

%
%



\definecolor{darkgreen}{rgb}{0,.5,0}



\def\ket#1{{\lvert}#1\rangle}

\DeclareMathOperator{\Tr}{Tr}

\newcommand{\igate}{{\sf I}}
\newcommand{\hgate}{{\sf H}}
\newcommand{\pgate}{{\sf P}}
\newcommand{\tgate}{{\sf T}}
\newcommand{\xgate}{{\sf X}}
\newcommand{\ygate}{{\sf Y}}
\newcommand{\zgate}{{\sf Z}}
\newcommand{\cnot}{{\sf CNOT}}


\pagestyle{plain}

{\title{Constructions for Quantum \\Indistinguishability Obfuscation
}

\author{Anne Broadbent and Raza Ali Kazmi\footnote{University of Ottawa, Department of Mathematics and Statistics. \texttt{(abroadbe,rkazmi)@uottawa.ca}}}

\date{}
\begin{document}
\maketitle

\begin{abstract}
An \emph{indistinguishability obfuscator} is a probabilistic polynomial-time algorithm that takes a circuit as input and outputs a new circuit  that has the same functionality as the input circuit, such that  for any two circuits of the same size that compute the \emph{same} function, the outputs of the indistinguishability obfuscator are indistinguishable. Here, we study schemes for indistinguishability obfuscation for \emph{quantum} circuits. We present two definitions for indistinguishability obfuscation:
in our first definition  ($qi\mathcal{O}$) the outputs of the obfuscator are required to be indistinguishable if the input circuits are perfectly equivalent, while in our second definition ($qi\mathcal{O}_{\bf D}$), the outputs are required to be indistinguishable as long as the input circuits are approximately equivalent with respect to a pseudo-distance~{\bf D}. Our main results provide (1) a computationally-secure scheme for $qi\mathcal{O}$ where the size of the output of the obfuscator is exponential in the number of non-Clifford ($\tgate$ gates), which means that the construction is efficient as long as the number of
$\tgate$ gates is logarithmic in the circuit size and (2) a statistically-secure $qi\mathcal{O}_{\bf D},$ for circuits that are close to the $k$th level of the Gottesman-Chuang hierarchy (with respect to {\bf D});  this construction is efficient as long as $k$ is small and fixed.

\end{abstract}

\section{Introduction}
At the intuitive level, an \emph{obfuscator} is a probabilistic polynomial-time algorithm that  transforms a circuit $C$ into another circuit $C’$ that has the same functionality as $C$ but that does not reveal anything about $C$, except its functionality \emph{i.e.}, anything that can be learned from $C’$ about $C$ can also be learned from  black-box access to the input-output functionality of $C$. This concept is formalized in terms of \emph{virtual black-box obfuscation}, and was shown~\cite{BGI+12} to be unachievable in general.
Motivated by this impossibility result, the same work proposed a weaker notion called \emph{indistinguishability obfuscation} ($i\mathcal{O}$).

In the classical case, an \emph{indistinguishability obfuscator} is a probabilistic polynomial-time algorithm that takes a circuit~$C$ as  input and outputs a circuit~$i\mathcal{O}(C)$ such that $i\mathcal{O}(C)(x)=C(x)$ for all inputs~$x$ and the size of $i\mathcal{O}(C)$ is at most polynomial in the size of~$C$. Moreover, it must be that for any two circuits $C_1$ and~$C_2$ of the same size and that compute the same function, their obfuscations are computationally indistinguishable. It is known that $i\mathcal{O}$ achieves the notion of \emph{best possible obfuscation}, which states that any information that is not hidden by the obfuscated circuit is also not hidden by any  circuit of similar size computing the same functionality~\cite{GR14}. Indistinguishability obfuscation is a very powerful cryptographic tool which is known to enable, among others: digital signatures, public key encryption~\cite{SW14}, multiparty key agreement, broadcast encryption~\cite{BZ14}, fully homomorphic encryption~\cite{CLTV15} and witness-indistinguishable proofs~\cite{BP15}.
Notable in the context of these applications is the \emph{punctured programming technique}~\cite{SW14} which manages to render an $i\mathcal{O}(C)$ into an intriguing cryptographic building block, and this, despite that fact that
 the security guarantees of $i\mathcal{O}(C)$ appear quite weak as they are applicable only if the two original circuits have \emph{exactly} the same functionality.

The first candidate construction of $i\mathcal{O}$ was published in~\cite{GGH+13}, with security relying on the presumed hardness of multilinear maps~\cite{CLT13,LSS14,GGH15}. Unfortunately, there have been many quantum attacks on multilinear maps~\cite{ABD16,CDPR16,CGH17}.  Recently, new $i\mathcal{O}$ schemes were proposed under  different assumptions ~\cite{AJL+19, JLS20,GJLS21}. Whether or not these schemes are resistant against quantum attacks remains to be determined.

Indistinguishability obfuscation has been studied for \emph{quantum} circuits in~\cite{AJJ14,AF16arxiv}. In a nutshell (see \Cref{sec:obf-quantum} for more details), ~\cite{AJJ14} shows a type of obfuscation for quantum circuits, but without a security reduction.  On the other hand,  the focus of ~\cite{AF16arxiv} is on impossibility of obfuscation for quantum circuits in a variety of scenarios. Thus, despite these works, until now, the achievability of indistinguishability obfuscation for quantum circuits has remained wide open.

\subsection{Overview of Results and Techniques}
\label{sec:techniques}

Our contribution establishes indistinguishability obfuscation for certain families of quantum circuits. We now overview each of our two main definitions, and methods to achieve them (\Cref{sec:intro-summary-results-first-definition} and \Cref{sec:intro-summary-results-second-definition}). We then compare the two approaches (\Cref{sec:intro-comparison}).

\subsubsection{Indistinguishability obfuscation for quantum circuits}
\label{sec:intro-summary-results-first-definition}
First, we define indistinguishability obfuscation for quantum circuits ($qi\mathcal{O}$) (\Cref{sec:definitions}) as an extension of the conventional classical definition. This definition specifies that on input a classical description of a quantum circuit~$C_q$, the obfuscator outputs a \emph{pair} $(\ket{\phi}, C_q^\prime)$, where $\ket{\phi}$ is an auxiliary quantum state and $C_q^\prime$ is a quantum circuit. For correctness, we require that  $ ||C_q^\prime(\ket{\phi}, \cdot)-C_{q}(\cdot) ||_\diamond=0$, whereas for security,
we require that, on input two functionally equivalent quantum circuits, the outputs of $qi\mathcal{O}$ are indistinguishable.
As a straightforward extension of the classical results, we then argue that \emph{inefficient} indistinguishability obfuscation exists.

In terms of constructing $qi\mathcal{O}$, we first focus on the family of \emph{Clifford} circuits and show two methods of obfuscation: one straightforward method based on the canonical representation of Cliffords, and another based on the
principle of gate teleportation~\cite{GC99}. Clifford circuits are quantum circuits that are built from the gate-set $\{\xgate, \zgate, \pgate, \cnot, \hgate\}$. They are known not to be universal for quantum computation and  are, in a certain sense, the quantum equivalent of classical \emph{linear circuits}. It is known that Clifford circuits can be efficiently simulated on a classical computer~\cite{Got98}; however, note that  this simulation is with respect to a \emph{classical} distribution, hence for a purely quantum computation, quantum circuits are required, which motivates the obfuscation of this circuit class. Furthermore,  Clifford circuits are an important building block for fault-tolerant quantum computing, for instance, due to the fact that Cliffords admit transversal computations in many fault-tolerant codes. We provide two methods to achieve $qi\mathcal{O}$ for Clifford circuits.

\paragraph{Obfuscating Cliffords using a canonical form.} Our first construction of~$qi\mathcal{O}$ for Clifford circuits starts with the well-known fact that a canonical form is an $i\mathcal{O}$. We point out that a canonical form for Clifford circuits was presented in~\cite{AG04}; this completes this construction (we also note that an alternative canonical form was also presented in~\cite{Sel13arxiv}). This canonical form technique  does not require any computational assumptions. Moreover, the obfuscated circuits are classical, and hence can be easily communicated, stored, used and copied.

\paragraph{Obfuscating Cliffords using gate teleportation.} Our second construction of $qi\mathcal{O}$ for Clifford circuits takes a very different approach. We start with the gate teleportation scheme~\cite{GC99}: according to this, it is possible to \emph{encode} a quantum computation~$C_q$ into a quantum state (specifically, by preparing a collection of entangled qubit pairs, and applying~$C_q$ to half of this preparation). Then, in order to perform a quantum computation on a target input $\ket{\psi}$, we \emph{teleport}~$\ket{\psi}$ \emph{into} the prepared entangled state. This causes the state $\ket{\psi}$ to undergo the evolution of~$C_q$, \emph{up to some corrections}, based on the teleportation outcome. If~$C_q$ is chosen from the Clifford circuits, these corrections are relatively simple\footnote{The correction is a tensor products of \emph{Pauli} operators, which is computed as a function of $C_q$ and of the teleportation outcome.} and thus we can use a  classical  $i\mathcal{O}$ to provide the correction function.
In contrast to the previous scheme, the gate teleportation scheme requires the assumption of quantum-secure classical $i\mathcal{O}$ for a certain family of functions (\cref{update function}) and the obfuscated circuits include a quantum system. While this presents a technological challenge to communication, storage and also usage, there could be advantages to storing quantum programs into quantum states, for instance to take advantage of their \emph{uncloneability} ~\cite{Aar09,BL19arxiv}.

\paragraph{Obfuscating Beyond Cliffords.} Next, in our main result for  \Cref{QiO:Clifford+T:family}, we generalize the gate teleportation scheme for Clifford circuits, and show a $qi\mathcal{O}$ obfuscator for all quantum circuits where the number of non-Clifford  gates is at most logarithmic in the circuit size. For this, we consider the commonly-used Clifford+$\tgate$ gate-set, and we note that the $\tgate$ relates to the $\xgate, \zgate$ as: $\tgate \xgate^b \zgate^a=\xgate^b \zgate^{a\oplus b}\pgate^b \tgate$. This means that, if we implement a circuit $C$ with $\tgate$ gates as in the gate teleportation scheme above, then the \emph{correction} function is no longer a simple Pauli update (as in the case for Cliffords). However, this is only partially true: since the Paulis form a basis, there is always a way to represent an update as a complex, linear combination of Pauli matrices. In particular, for the case of a~$\tgate$, we note that $\pgate=(\frac{1+i}{2}) \igate + (\frac{1-i}{2})\zgate$. Hence, it \emph{is} possible to produce an update function for general quantum circuits that are encoded via gate teleportation. To illustrate this, we first analyze the case of a general Clifford+$\tgate$ quantum circuit on a \emph{single} qubit (\Cref{sec:1-qubit}). Here, we are able to provide $qi\mathcal{O}$ for all circuits. Next, for general quantum circuits, (\Cref{sec:n-qubit-circuits}), we note that the update function exists for all circuits, but becomes more and more complex as the number of $\tgate$ gates increases. We show that if we limit the number of $\tgate$ gates to be logarithmic in the circuit size, we can reach an efficient construction. Both of these constructions assume a quantum-secure, classical indistinguishability obfuscation.

To the best of our knowledge, our gate teleportation provides the first method for indistinguishability obfuscation that is efficient for a large class of quantum circuits, beyond Clifford circuits. Note, however that canonical forms (also called \emph{normal} forms) are known for \emph{single}-qubits universal quantum circuits~\cite{MA08arxiv,GS19arxiv}.
We note that,  for many other quantum cryptographic primitives, it is the case that the $\tgate$-gate is the bottleneck (somewhat akin to a \emph{multiplication} in the classical case). This has been observed, \emph{e.g.}, in the context of \emph{homomorphic quantum encryption}~\cite{BJ15,DSS16}, and instantaneous quantum computation~\cite{Spe16}.
Because of these applications, and since the $\tgate$ is also typically also the bottleneck for fault-tolerant quantum computing, techniques exist to reduce the number of $\tgate$ gates in quantum circuits~\cite{AMMR13,AMM14,DMM16} (see \Cref{sec:intro-related-work} for more on this topic).

\subsubsection{Indistinguishability obfuscation for quantum circuits, with respect to a pseudo-distance}
\label{sec:intro-summary-results-second-definition}
Next in \Cref{sec:quantum:iO:approx:circuits}, we define indistinguishability obfuscation for quantum circuits with respect to some pseudo-norm {\bf D}, which we call~$qi\mathcal{O}_{\bf D}$. This definition specifies that on input a classical description of a quantum circuit~$C_q$, the obfuscator outputs a \emph{pair} $(\ket{\phi}, C_q^\prime)$, where~$\ket{\phi}$ is an auxiliary quantum state and $C_q^\prime$ is a quantum circuit. For correctness, we require that
${\bf D}(C_q^\prime(\ket{\phi}, \cdot),C_{q}(\cdot))\leq {\tt negl}(n)$, whereas for security, we require that, on input two \emph{approximately} equivalent quantum circuits (\Cref{def:aqec}), the outputs of $qi\mathcal{O}_{\bf D}$ are statistically indistinguishable. This definition is more in line with~\cite{AF16arxiv}.

We show how to construct a statistically-secure quantum indistinguishability obfuscation with respect to the pseudo-distance ${\bf D}$ (see \Cref{QiO:gottesman-chuang}) for quantum circuits that are very close to $k$th level of the  Gottesman-Chuang hierarchy~\cite{GC99}, for some fixed $k$ (see \Cref{sec:gottesman-chuang}). The construction takes a circuit~$U_q$ as an input with a promise that the distance ${\bf D}(U_q, C)\leq \epsilon <\frac{1}{2^{k+1/2}}$  for some $C\in \mathcal{C}_k.$ It computes the conjugate circuit $U_q^\dagger$ and then runs Low's learning algorithm as a subroutine on inputs $U_q$ and $U_q^\dagger$ \cite{Low09}. The algorithm outputs whatever Low's learning algorithm outputs. Note that  Low's learning algorithm runs in time super-polynomial in $k,$ therefore for our construction to remain efficient the parameter $k$ is some small fixed integer (say $k=5$). Note that for $k>2$, the set $\mathcal{C}_k$ includes all Clifford unitaries as well as some non-Clifford unitaries~\cite{Low09}.

\subsubsection{Comparison of the two Approaches}
\label{sec:intro-comparison}
Our notions of  $qi\mathcal{O}$ and  $qi\mathcal{O}_{{\bf D}}$ are incomparable.
 To see this, on one hand, note that the basic  instantiation of an indistinguishability obfuscator that outputs a canonical form is no longer  secure in the  definition of indistinguishability with respect to a pseudo-norm.\footnote{If two different circuits are close in functionality but not identical, then we have no guarantee that their canonical forms are close.} On the other hand, the construction for $qi\mathcal{O}_{{\bf D}}$ that we give in \Cref{QiO:gottesman-chuang} does not satisfy the definition of $qi\mathcal{O}$, because the functionality is not perfectly preserved, which is a requirement for $qi\mathcal{O}$.
  We recall that in the classical case, it is generally considered an \emph{advantage} that  $i\mathcal{O}$ is a relatively weak notion (since it is more easily attained) and that, despite this, a host of uses of $i\mathcal{O}$ are known.  We thus take $qi\mathcal{O}$ as the more natural extension of classical indistinguishability obfuscation to the quantum case, but we note that issues related to the continuity of quantum mechanics and the inherent approximation in any universal quantum gateset justify the relevance for our approach to $qi\mathcal{O}_{{\bf D}}$.

We now compare the schemes that we achieve. The most general scheme that we give as a construct for $qi\mathcal{O}$ (\Cref{QiO:qcircuit-teleportation}) allows  to obfuscate any polynomial-size quantum circuit (with at most $O(\log)$ non-Clifford gates). While this is a restricted class, it is well-understood and we believe that this technique may be amenable to an extension that would result into a full $qi\mathcal{O}$.

In comparison, the scheme that we give for $qi\mathcal{O}_{{\bf D}}$, based on Low's learning algorithm \cite{Low09}
 has some advantages over the teleportation-based constructions. Firstly,  the circuits to be obfuscated don't need to be of equal size or perfectly equivalent and  the outputs of the obfuscator remain statistically indistinguishable as long as the circuits are approximately equivalent (with respect to the pseudo-distance {\bf D}). Secondly, \Cref{QiO:gottesman-chuang} does not require any computational assumptions, whereas the teleportation-based constructions require a quantum-secure classical indistinguishability obfuscator. However, beyond the fact that~$\mathcal{C}_k$ contains all Clifford circuits, it is not clear how powerful unitaries are in the $k$th level of the Gottesman-Chuang hierarchy (especially for a fixed small~$k$). Even when $k\rightarrow \infty$, the hierarchy does not include all unitaries. In terms of extending this technique, Low's learning algorithm exploits the structure of the Gottesman-Chuang hierarchy and it not obvious how one can apply this technique to arbitrary quantum circuits.

\subsection{More on Related Work}
\label{sec:intro-related-work}

\paragraph{Quantum Obfuscation.}
\label{sec:obf-quantum}
Quantum obfuscation was first studied in ~\cite{AJJ14}, where a notion called
$(G,\Gamma)$-{\em indistinguishability obfuscation} was proposed,  where $G$ is a set of gates and $\Gamma$ is a set of relations satisfied by the elements of~$G.$ In this notion, any two circuits over the set of gates $G$ are perfectly indistinguishable if they differ by some sequence of applications of the relations in~$\Gamma.$

Since perfect indistinguishability obfuscation is known to be impossible under the assumption that $\mathsf{P} \neq \mathsf{NP}$ \cite{GR14},
one of the motivations of this work was to provide a weaker definition of perfectly indistinguishable obfuscation, along with possibility results. However, to the best of our knowledge, $(G,\Gamma)$-{\em indistinguishability obfuscation} is incomparable with computational indistinguishability obfuscation \cite{BGI+12,GGH+13}, which is the main focus  of our work.

Quantum obfuscation is studied in ~\cite{AF16arxiv}, where the various notions of quantum obfuscation are defined (including quantum black-box obfuscation, quantum indistinguishability obfuscation, and quantum best-possible obfuscation).
A contribution of ~\cite{AF16arxiv} is to extend the classical impossibility results to the quantum setting, including \emph{e.g.} showing that
 each of the three variants of quantum indistinguishability obfuscation is equivalent to the analogous variant of quantum best-possible
obfuscation, so long as the obfuscator is efficient.  This work shows that the existence of a computational quantum indistinguishability obfuscation implies a witness encryption scheme for all languages in~\textsf{QMA}. Various impossibiliity results are also shown: that efficient statistical indistinguishability obfuscation is impossible unless \textsf{PSPACE} is contained in \textsf{QSZK}\footnote{\textsf{PSPACE} is the class of decision problems solvable by a Turing machine in polynomial space and  \textsf{QSZK} is the class of decision problems that admit a quantum statistical zero-knowledge proof system.} (for the case of circuits that include measurements), or unless \textsf{coQMA}\footnote{\textsf{coQMA} is the \emph{complement} of \textsf{QMA}, which is the class of decision problems that can be verified by a one-message quantum interactive proof.} is contained in \textsf{QSZK} (for the case of unitary circuits). Notable here is that \cite{AF16arxiv}
  defines a notion of indistinguishability  obfuscation where security must hold for circuits that are \emph{close} in functionality (this is similar to our definition of $qi\mathcal{O}_{\bf D}$);  it is however unclear if their  impossibility results  hold for a notion of quantum indistinguishability along the lines of our definition of $qi\mathcal{O}$. See \Cref{sec:defs-qiO} for further discussion of the links between this definition and ours. We note that \cite{AF16arxiv} does not provide any concrete instantiation of obfuscation.

  Recently it has been shown that virtual black-box obfuscation of classical circuits via quantum mechanical means is also impossible ~\cite{ABDS20,AP20}.

\paragraph{Quantum Homomorphic Encryption.}
In \emph{quantum homomorphic encryption}, a computationally-weak client is able to send a ciphertext to a quantum server, such that the quantum server can perform a quantum computation on the encrypted data, thus producing an encrypted output which the client can decrypt, and obtaining the result of the quantum computation.

This primitive was formally defined in~\cite{BJ15} (see also ~\cite{DSS16,Bra18}), where it was shown how to achieve homomorphic quantum computation for quantum circuits of  low $\tgate$-depth, by assuming quantum-secure classical fully homomorphic encryption.  We note that even the simplest scheme in~\cite{BJ15} (which allows the homomorphic evaluation of \emph{any} Clifford circuit), requires computational assumptions in order for the server to update homomorphically the classical portion of the ciphertext, based on the choice of Clifford. In contrast, here we are able to give information-theoretic constructions for this class of circuits (essentially, because the choice of Clifford is chosen by the obfuscator, not by the evaluator). We thus emphasize that in $i\mathcal{O}$, we want to hide the \emph{circuit}, whereas in homomorphic encryption, we want to hide the \emph{plaintext} (and allow remote computations on the ciphertext). Since the evaluator in homomorphic encryption has control of the circuit, but not of the data, the evaluator knows which types of gates are applied, and the main obstacle  is to perform a correction after a $\tgate$-gate, controlled on a classical value that is held only in an encrypted form by the evaluator.
In contrast to this, in $i\mathcal{O}$, we want to hide the inner workings of the circuit. By using gate teleportation, we end up in a situation where the evaluator \emph{knows} some classical values that have affected the quantum computation in some undesirable way, and then we want to hide the inner workings of \emph{how} the evaluator should compensate for these undesirable effects.
Thus, the techniques of quantum homomorphic encryption do not seem directly applicable, although we leave as an open question if they could be used in some indirect way, perhaps towards efficient $qi\mathcal{O}$ for a larger family of circuits.

\subsection{Open Questions}
The main open question is efficient quantum indistinguishability obfuscation for quantum circuits with super-logarithmic number of $\tgate$-gates.
Another open  question is about the applications of quantum indistinguishability obfuscation. While we expect that many of the uses of classical $i\mathcal{O}$ carry over to the quantum case, we leave as future work the formal study of these techniques.

\paragraph{Outline.}
The remainder of this paper is structured as follows. \Cref{sec:prelims} overviews basic notions required in this work. In \Cref{sec:definitions}, we formally define indistinguishability obfuscation for quantum circuits. In \Cref{QiO:Clifford-Circuits}, we provide the construction for Clifford circuits. In \Cref{QiO:Clifford+T:family}, we give our main result which shows quantum indistinguishability obfuscation for quantum circuits, which is efficient for circuits having at most a logarithmic number of $\tgate$ gates. Finally in \Cref{sec:quantum:iO:approx:circuits}, we consider the notion of quantum indistinguishability obfuscation with respect to a pseudo-distance, and show how to instantiate it for a family of circuits close to the Gottesman-Chuang hierarchy.  

\section{Preliminaries}
\label{sec:prelims}
\subsection{Basic Classical Cryptographic Notions}
\label{sec:classical-prelims}
Let $\mathbb{N}$ be the set of positive integers. For $n \in \mathbb{N}$, we set $[n] = \{1, \cdots, n\}.$ We denote the set of all  binary strings of length $n$ by $\bit{n}.$
 An element $s \in \bit{n}$ is called a bitstring, and $|s|=n$ denotes its length. Given two bit strings $x$ and~$y$ of equal length, we denote their bitwise XOR by $x \oplus y$. For a finite set $X$, the notation $x \xleftarrow{\text{\$}} X$ indicates that $x$ is selected uniformly at random from~$X$. We denote the set of all $d \times d$ unitary matrices by $\mathcal{U}(d)=\{U \in \mathbb{C}^{d\times d} \mid UU^\dagger={\bf I}\}$, where $U^\dagger$ denotes the conjugate transpose of $U.$

A function $ {\tt negl}:\mathbb{N}\rightarrow\mathbb{R}^{+}\cup \{0\}$ is \emph{negligible} if for every positive polynomial~$p(n)$, there exists a positive integer $n_0$ such that  for all  $n>n_0,$ $ {\tt negl}(n) < 1/ p(n).$   A typical use of negligible functions is to indicate that the probability of success of some algorithm is too small to be amplified to a constant by a feasible (\emph{i.e.}, polynomial) number of repetitions.

\subsection{Classical Circuits and Algorithms}
\label{sec:cir:alg}
A  deterministic polynomial-time (or {\bf PT}) algorithm $\mathcal{C}$ is defined by a polynomial-time uniform\footnote{Recall that polynomial-time uniformity means that there exists a polynomial-time Turing machine which, on input~$n$ in unary, prints a description of the $n$th circuit in the family.} family $\mathcal{C}=\{C_{n}\mid n\in \mathbb{N} \}$ of classical Boolean circuits over some gate set, with one circuit for each possible input size $n\in\mathbb{N}.$ For a bitstring $x$, we define $\mathcal{C}(x) := \mathcal{C}_{|x|}(x)$. We say that a function family $f:\{0,1\}^n \rightarrow \{0,1\}^m$ is {\bf PT}-computable if there exists a polynomial-time $\mathcal{C}$ such that $\mathcal{C}(x) = f(x)$ for all~$x$; it is implicit that $m$ is a function of $n$ which is bounded by some polynomial, \emph{e.g.}, the same one that bounds the running time of $\mathcal{C}.$ Note that in the literature, circuits that compute functions whose range is $\{0,1\}^m$ are often called multi-output Boolean circuits \cite{GMOR15}, but in this paper we simply called them Boolean circuits \cite{Sip12}.

A probabilistic polynomial-time algorithm (or {\bf PPT}) is again a polynomial-time uniform family of classical Boolean circuits, one for each possible input size~$n.$ The $n$th circuit still accepts $n$ bits of input, but now also has an additional ``coins'' register of $p(n)$ input wires. Note that uniformity enforces that the function $p$ is bounded by some polynomial. For a {\bf PPT} algorithm~$\mathcal{C},$ $n$-bit input $x$ and $p(n)$-bit coin string $r$, we set $\mathcal{C}(x; r) := \mathcal{C}_n(x; r).$ In contrast with the PT case, the notation algorithm $\mathcal{C}(x)$ will now refer to the random variable algorithm $\mathcal{C}(x; r)$ where $r \xleftarrow{\text{\$}} \{0,1\}^{p(n)}.$

\subsection{Classical Indistinguishability}

Here, we define indistinguishability for classical random variables, against a quantum distinguisher (\Cref{def:classical-indis}).

\begin{definition} (Statistical Distance)
Let  $X$ and $Y$ be two random variables over some countable set $\Omega$. The statistical distance between $X$ and~$Y$ is

\begin{center}
	$\Delta(X,Y)=\frac{1}{2}\left\{\sum_{\omega\in \Omega} \left |Pr[X(\omega)]- Pr[Y(\omega)]\right | \right\}.$
\end{center}
\end{definition}

\begin{definition} (Indistinguishability) \label{def:classical-indis}
Let $\mathcal{X}=\{X_n\}_{n\in\mathbb{N}}$ and $\mathcal{Y}=\{Y_n\}_{n\in\mathbb{N}}$ be two distribution ensembles indexed by a parameter $n.$ We say
\begin{enumerate}
\item  $\mathcal{X}$ and $\mathcal{Y}$ are \emph{perfectly indistinguishable} if for all $n,$  $$\Delta(X_n,Y_n)=0.$$
\item $\mathcal{X}$ and $\mathcal{Y}$ are \emph{statistically indistinguishable} if there exists a negligible function {\tt negl} such that  for all sufficiently large $n$: $$\Delta(X_n,Y_n)\leq  {\tt negl}(n).$$
\item $\{X_n\}_{n\in\mathbb{N}}$ and $\{Y_n\}_{n\in\mathbb{N}}$ are \emph{computationally indistinguishable} if for any polynomial-time quantum distinguisher $\mathcal{D}_q$, there exists a negligible function {\tt negl} such that:
$$\Big |{\rm Pr}[\mathcal{D}_q(X_n)=1]-{\rm Pr}[\mathcal{D}_q(Y_n)=1] \Big |\leq   {\tt negl}(n).$$
\end{enumerate}
\end{definition}

\subsection{Classical Indistinguishability Obfuscation}
\label{def:iO}

Let $\mathcal{C}$ be a family of probabilistic polynomial-time circuits. For $n\in\mathbb{N},$ let~$C_n$ be the circuits in $\mathcal{C}$ of input length~$n.$ We now provide a definition of classical  indistinguishability obfuscation ($i\mathcal{O}$) as defined in \cite{GR14}, but where we make a few minor modifications.\footnote{We make a few design choices that are more appropriate for our situation, where we show the \emph{possibility} of  $i\mathcal{O}$ against quantum adversaries:  our adversary is a probabilistic polynomial-time quantum algorithm, we dispense with the mention of the random oracle, and note that our indistinguishability notions are defined to hold for all inputs.}

\begin{definition}\label{def:quantum-secureiO} {\rm({\bf Indistinguishability Obfuscation}, $i\mathcal{O}$)}
A probabilistic polynomial-time algorithm is a \emph{quantum-secure indistinguishability obfuscator} ($i\mathcal{O}$) for a class of circuits ${\mathcal C},$ if the following conditions hold:
\begin{enumerate}
\item  {\tt Preserving Functionality:} For any $C\in C_n:$
													$$i\mathcal{O}(x)=C(x), \mbox{ for all } x \in \{0,1\}^n$$	
The probability is taken over the $i\mathcal{O}$'s coins.

\item  {\tt Polynomial Slowdown:} There exists a polynomial $p(n)$ such that for all input lengths, for any $C\in C_n,$ the obfuscator $i\mathcal{O}$ only enlarges $C$ by a factor of
$p(|C|):$
													$$ |i\mathcal{O}(C)| \leq p(|C|).$$

\item {\tt Indistinguishability:}  An $i\mathcal{O}$ is said to be a computational/statistical/\\perfect indistinguishability obfuscation for the family $\mathcal{C},$ if for all large enough input lengths, for any circuit $C_1\in C_n$ and for any $C_2\in C_n$ that computes the same function as $C_1$ and such that $|C_1|=|C_2|,$ the distributions $i\mathcal{O}(C_1))$ and $i\mathcal{O}(C_2)$ are (respectively) computationally/statistically/perfectly indistinguishable.
\end{enumerate}										
\end{definition}

\subsection{Basic Quantum Notions}
\label{sec:quantum-prelims}
Given an $n$-bit string $x$, the corresponding  $n$-qubit quantum computational basis state is denoted~$\ket{x}$. The $2^n$-dimensional Hilbert space spanned by $n$-qubit basis states is denoted:
\begin{equation}
\label{eq:hilbert-space}
\mathcal{H}_n := \textbf{span} \left\{ \ket{x} : x \in \bit{n} \right\}\,.
\end{equation}
We denote by $\mathcal{D}(\mathcal{H}_n)$ the set of density operators (\emph{i.e.}, valid quantum states) on~$\mathcal{H}_n$. These are linear operators on $\mathcal{D}(\mathcal{H}_n)$ which are positive-semidefinite and have trace equal to $1$.

\subsection{Norms and Pseudo-Distance}
\label{sec:norms}
The trace distance between two quantum states $\rho, \sigma\in \mathcal{D}(\mathcal{H}_n)$ is given by:
 $$||\rho-\sigma||_{tr}:=\frac{1}{2}\Tr\left(\left\lvert\sqrt{(\rho-\sigma)^\dagger (\rho-\sigma)}\right\rvert\right),$$
 where $\lvert \cdot\rvert$ denotes the positive square root of the matrix $\sqrt{(\rho-\sigma)^\dagger (\rho-\sigma)}.$

Let $\Phi$ and $\Psi$ be two admissible operators of type $(n,m)$\footnote{An operator is admissible if its action on density matrices is linear, trace-preserving, and completely positive. A operator's type is $(n,m)$ if it maps $n$-qubit states to $m$-qubit states.}. The \emph{diamond norm} between two quantum operators is
$$||\Phi-\Psi||_\diamond :=\underset{\rho\in \mathcal{D}(\mathcal{H}_{2n}) }{max} ||(\Phi \otimes I_n) \rho-(\Psi \otimes I_n) \rho||_{tr}$$

The Frobenius norm of a matrix $A\in\mathbb{C}^{n \times m}$ is defined as  $||A||_F=\sqrt{\Tr (AA^\dagger)}.$ Let $U_1, U_2 \in \mathcal{U}(d)$ be two $d \times d$ unitary matrices. The phase invariant distance between $U_1$ and $U_2$ is

$${\bf D}(U_1,U_2)=\frac{1}{\sqrt{2d^2}} ||U_1\otimes U_1^*- U_2\otimes U_2^*||_F$$
		  $$=\sqrt{1-\left|\frac{ {\rm Tr}(U_1U_2^\dagger)}{d}\right|^2}\,,$$
where $U_i^*$ denotes the matrix with only complex conjugated entries and no transposition and $| z |$ denotes the norm of the complex number $z$. Note that~${\bf D}$ is a pseudo-distance since ${\bf D}(U_1,U_2) =0$ does not imply $U_1=U_2,$ but that $U_1$ and $U_2$ are equivalent up to a phase so the difference is unobservable. It is easy to see that {\bf D}  satisfies the axioms of symmetry (${\bf D}(U_1,U_2)={\bf D}(U_2,U_1)$), the triangle inequality (${\bf D}(U_1,U_2)\leq{\bf D}(U_1,U)+{\bf D}(U,U_2)$) and non-negativity (${\bf D}(U_1,U_2)\geq 0$).

\subsection{Bell Basis and Measurement}
\label{sec:bell:basis}
The four states $\{\ket{\beta_{00}}, \ket{\beta_{01}}, \ket{\beta_{10}}, \ket{\beta_{11}}\}$ are called \emph{Bell States} or \emph{EPR pairs} and form an orthonormal basis of $\mathcal{H}_2.$

$$\ket{\beta_{00}}=\frac{1}{\sqrt2}\left(\ket{00}+\ket{11}\right) \hspace{1cm}  {\beta_{01}}=\frac{1}{\sqrt2}\left(\ket{01}+\ket{10}\right)$$
$$\ket{\beta_{10}}=\frac{1}{\sqrt2}\left(\ket{00}-\ket{11}\right) \hspace{1cm}   \ket{\beta_{11}}=\frac{1}{\sqrt2}\left(\ket{01}-\ket{10}\right)$$
We define a generalized Bell state as a tensor product of $n$ Bell states
									$$\ket{\beta_{s}}=\ket{\beta_{a_i,b_i}}^{\otimes_{i=1}^{n}},$$
									
where $s=a_1b_1,\ldots,a_nb_n\in\{0,1\}^{2n}.$ The set of generalized Bell States $\{\ket{\beta_{s}} \mid s\in \{0,1\}^{2n}\}$ forms an orthonormal basis of $\mathcal{H}_n.$
Given a quantum state 				 $$\ket{\psi}=\sum_{s\in\{0,1\}^{2n}} \alpha_{s}\ket{\beta_{s}},$$
 a Bell measurement in the (generalized) Bell basis on the state $\ket{\psi}$ outputs the string $s$ with probability $|\alpha_{s}|^2$ and leaves the system in the state $\ket{\beta_{s}}.$

\subsection{Quantum Gates}
We will work with the following set of unitary gates
$$ \igate = \left[\begin{array}{cc} 1 & 0\\ 0 & 1\end{array}\right],
\quad \xgate = \left[\begin{array}{cc} 0 & 1\\ 1 & 0\end{array}\right],
\quad\ygate = \left[\begin{array}{cc} 0 & i\\ -i & 0\end{array}\right],
\quad\zgate = \left[\begin{array}{cc} 1 & 0\\ 0 & -1\end{array}\right], $$
$$ \quad\hgate = \frac{1}{\sqrt{2}}\left[\begin{array}{cc}1 & 1\\1 & -1\end{array}\right],
\quad\cnot = \left[\begin{array}{cccc} 1 & 0 & 0 & 0\\ 0 & 1 & 0 & 0\\ 0 & 0 & 0 & 1\\ 0 & 0 & 1 & 0\end{array}\right], \text{ and}
\quad\tgate = \left[\begin{array}{cc} 1 & 0\\ 0 & e^{i\pi/4}\end{array}\right].$$

For any single-qubit density operator $\rho \in \mathcal{D} (\mathcal{H}_1)$, we can encrypt it via the \emph{quantum one-time pad} by sampling  uniform bits $s$ and $t$, and producing
 $\xgate^s \zgate^t \rho \zgate^t \xgate^s$. To an observer that has no knowledge of $s$ and $t$, this system is information-theoretically indistinguishable from the state $\mathds{1}_1/2$ (where $\mathds{1}_1$ is the 2 by 2 identity matrix)\cite{AMTW00}.

\subsection{Gottesman-Chuang Hierarchy}
\label{sec:gottesman-chuang}
The $n$-qubit Pauli group $\mathcal{P}_n$ is a multiplicative group of order $4^{n+1}$, defined as:
$$\mathcal{P}_n=\{\alpha_1 P_1 \otimes \cdots \otimes \alpha_n P_n \mid \alpha_i \in \{\pm 1, \pm i\}, P_i\in\{\igate, \xgate, \zgate, \ygate \}\}\,.$$

Let $C_1$ be the Pauli group $\mathcal{P}_n.$  Then the level $\mathcal{C}_k$ of the \emph{Gottesman-Chuang} hierarchy is defined recursively~\cite{GC99}:
$$\mathcal{C}_k=\{U\in \mathcal{U}(2^n): U\mathcal{P}_nU^\dagger \subseteq \mathcal{C}_{k-1}\}.$$
Note that $\mathcal{C}_2$ is the Clifford group and for $k>2$,  $\mathcal{C}_k$ is no longer a group but contains unitaries that contains a universal gate set.

The set of gates $\{\xgate, \zgate, \pgate, \cnot,\hgate\}$ applied to arbitrary wires redundantly generates the \emph{Clifford group}.
We note the following relations between these gates (these relations hold up to \emph{global phase}; in this work, we use the convention that equal signs for pure states and unitaries hold up to global phase.)
$$\xgate\zgate = - \zgate\xgate,\quad \tgate^2=\pgate,\quad\pgate^2=\zgate,\quad\hgate\xgate\hgate=\zgate,\quad \tgate\pgate=\pgate\tgate,\quad\pgate\zgate=\zgate\pgate.$$
Also, for any $a,b\in\{0,1\}$ we have
$\hgate\xgate^b\zgate^a=\xgate^a\zgate^b \hgate$\,.

\subsection{Quantum Circuits and Algorithms}
\label{sec:quantum-algo}
A quantum circuit is an acyclic network of quantum gates connected by wires. The quantum gates represent quantum operations and wires represent the qubits on which gates act. In general, a quantum circuit can have $n$-input qubits and $m$-output qubits for any integer $n, m\geq 0.$ The \emph{$\tgate$-count} is the total number of $\tgate$-gates in a quantum circuit.

A quantum circuit that computes a unitary matrix is called a \emph{reversible  quantum circuit}, \emph{i.e.}, it always possible to uniquely recover the input, given the output. A set of gates is said to be \emph{universal} if for any integer $n \geq 1,$ any $n$-qubit unitary operator can be approximated to arbitrary accuracy by a quantum circuit using only gates from that set \cite{KLM07}. It is a well-known fact that Clifford gates are not universal, but adding any non-Clifford gate, such as $\tgate$, gives a universal set of gates \cite{KLM07}\footnote{In this work, we assume circuits are given in the Clifford + $\tgate$ gateset}.  \emph{Generalized quantum circuits} (which implement \emph{superoperators}) are composed of the unitary gates, together with  trace-out and measurement operations. It is well-known that a generalized quantum circuit can be implemented by  adding auxiliary states to the original system, applying a unitary operation on the joint system, and then tracing out some subsystem~\cite{KLM07}.

A family of generalized quantum circuits $\mathcal{C}=\{C_{q_n}\mid n\in \mathbb{N}\}$,  one for each input  size $n\in \mathbb{N}$, is called \emph{polynomial-time uniform} if there exists a deterministic Turing machine~$M$ such that:
 \begin{inlinelist}
 \item for each $n\in\mathbb{N},$ $M$ outputs a description of $C_{q_n}\in \mathcal{C}$ on input $1^n$; and
 \item  for each $n\in\mathbb{N},$ $M$ runs in $poly(n).$
 \end{inlinelist}
We define a \emph{quantum polynomial-time algorithm} (or QPT) to be a polynomial-time uniform family of generalized quantum circuits.

\subsection{Quantum Indistinguishability}
\label{sec:comp-stat}
Here, we define indistinguishability for indistinguishability for quantum states (\Cref{def:indis-quantum}).

\begin{definition} (Indistinguishability of Quantum States)
\label{def:indis-quantum}
  Let $\mathcal{R}=\{\rho_n\}_{n\in \mathbb{N}}$ and  $\mathcal{S}=\{\sigma_n\}_{n\in \mathbb{N}}$ be two ensembles of quantum states such that $\rho_n$ and $\sigma_n$ are $n$-qubit states. We say
 \begin{enumerate}
 \item $\mathcal{R}$ and $\mathcal{S}$ are \emph{perfectly indistinguishable} if for all $n,$
																$$\rho_n=\sigma_n.$$
 \item $\mathcal{R}$ and $\mathcal{S}$ are \emph{statistically indistinguishable} if there exists a negligible function {\tt negl} such that for all sufficiently large $n$:
 																		$$||\rho_n-\sigma_n||_{tr}\leq  {\tt negl}(n).$$

\item  $\mathcal{R}$ and $\mathcal{S}$ are \emph{computationally indistinguishable} if there exists a negligible function {\tt negl} such that for every state $\rho_n\in \mathcal{R}$, $\sigma_n\in \mathcal{S}$ and for all polynomial-time quantum distinguisher $\mathcal{D}_q$,  we have:
            						 $$\Big |{\rm Pr}[\mathcal{D}_q(\rho_n)=1]-{\rm Pr}[\mathcal{D}_q(\sigma_n)=1] \Big |\leq   {\tt negl}(n).$$
\end{enumerate}
\end{definition}

\subsection{Quantum Teleportation}
Here we  provide a high-level description of quantum teleportation; for a more rigorous treatment see \cite{BBC+93}.
Suppose Alice has a quantum state $\ket{\psi}=\alpha\ket{0}+\beta\ket{1}$\footnote{For simplicity we assume that $\ket{\psi}$ is single-qubit pure state.} that she wants to send  to Bob who is located far away from Alice. One way for Alice to send her qubit to Bob is via the quantum teleportation protocol. For teleportation to work, Alice prepares a $2$-qubit Bell state  $$\ket{\beta_{00}}_{AB}=\frac{1}{\sqrt2}\left(\ket{00}_{AB}+\ket{11}\right)_{AB},$$
and sends physically one of the qubit to Bob and keeps the other to herself (this is what subscript $AB$ means). We can now write the 3-qubit system as
\begin{equation}
\label{eq:sys}
													\ket{\psi}  \otimes \ket{\beta_{00}}_{AB}
\end{equation}																				
Alice now performs a joint measurement on $\ket{\psi}$ and her part of the EPR pair in the Bell basis and obtains the output of the measurement (classical bits $a,b$). After this step, Bob's part of EPR pair has been transformed into the state
$$\xgate^b \zgate^a \ket{\psi}.$$
Alice sends the two classical bits $(a,b)$ to Bob, who performs the correction unitary $Z^aX^b$ to the state he possesses and obtains the state~$\ket{\psi}.$

\subsection{Gate Teleportation}
One of the main applications of quantum teleportation is in fault-tolerant quantum computation~\cite{GC99}. To construct unitary quantum circuits, we need to have  access to some universal set of quantum gates\footnote{$\{\hgate, \tgate, \cnot\}$ is a universal set of quantum gates \cite{KLM07}.}. Suppose we want to evaluate a single-qubit gate on some quantum state $\ket{\psi}.$ If we directly apply $U$ on  $\ket{\psi}$ and~$U$ fails, then it may also destroy the state. Quantum teleportation gives a way of solving this problem. Instead of applying  $U$ directly to $\ket{\psi}$, we can apply $U$ to the system $B$  in \Cref{eq:sys} and then follow the gate teleportation protocol and obtain  $U(\ket{\psi}).$ If $U$ fails, then the Bell state might be destroyed, but there is no harm done, since we can create another EPR pair and try again. The gate teleportation can easily be generalized to evaluate any $n$-qubit Clifford circuit (\Cref{algo:gate-teleport}).

\begin{remark}
 In this section, we  only discuss how to evaluate Clifford gates using  gate teleportation. Note that we can evaluate any unitary circuit using gate teleportation but the correction unitary becomes more complicated (it is no longer a tensor product of Paulis). This is discussed in \Cref{QiO:Clifford+T:family}.
\end{remark}

\begin{algorithm}[]
{\bf Input}: A $n$-qubit Clifford Circuit $C_q$ and $n$-qubit quantum state $\ket{\psi}$
\caption{Gate Teleportation Protocol.}
\label{algo:gate-teleport}
\begin{enumerate}
  \item Prepare a tensor product of $n$  Bell states: $\ket{\beta^{ 2n}}=\ket{\beta_{00}}\otimes \cdots \otimes \ket{\beta_{00}}.$
\item Write the joint system as $\ket{\psi}_C \ket{\beta^{ 2n}}_{AB}.$
\item Apply the circuit $C_q$ on the subsystem $B.$
\item Perform a measurement in the  generalized Bell Basis (generalized Bell measurement) on the system $CA$ and obtain a binary string $a_1b_1,\ldots, a_nb_n.$ The remaining system after the measurement is
\begin{equation}
 \label{prl:eq2:gate-teleport}
  C_q \left({\xgate^{b_i} \zgate^{a_i}}\right)^{\otimes_{i=1}^{n}}\ket{\psi}.
\end{equation}
\item Compute the correction bits using the update function $F_{C_q}$ (\Cref{update function}).
\begin{equation}\label{algo:update:func}
              F_{C_q}(a_1b_1,\dots,a_nb_n)= a_1^\prime b_1^\prime,\dots,a_n^\prime b_n^\prime \in\{0,1\}^{2n}.
\end{equation}
\item Compute the correction unitary $U_{F_{C_q}}=\left({\zgate^{a_i^\prime} \xgate^{b_i^\prime}}\right)^{\otimes_{i=1}^{n}}$
\item Apply $U_{F_{C_q}}$ to the system (\Cref{prl:eq2:gate-teleport}).
 \begin{equation}
  \label{prl:eq4:gate-teleport}
  \begin{aligned}
  &U_{F_{C_q}} \cdot  C_q \left({\xgate^{b_i} \zgate^{a_i}}\right)^{\otimes_{i=1}^{n}}\ket{\psi}=\left({\zgate^{a_i^\prime} \xgate^{b_i^\prime}}\right)^{\otimes_{i=1}^{n}} C_q \left({\xgate^{b_i} \zgate^{a_i}}\right)^{\otimes_{i=1}^{n}}\ket{\psi}\\
  &=\left({\zgate^{a_i^\prime} \xgate^{b_i^\prime}}\right)^{\otimes_{i=1}^{n}}  \left({\xgate^{b_i^\prime} \zgate^{a_i^\prime}}\right)^{\otimes_{i=1}^{n}}C_q(\ket{\psi})\\
 & =C_q( \ket{\psi}).
  \end{aligned}
 \end{equation}
\end{enumerate}	
\end{algorithm}

\subsection{Update Functions for Quantum Gates}\label{update function}
Let $C_q$ be an $n$-qubit circuit consisting of a sequence of Clifford gates $g_1,\ldots, g_{|C_q|}.$ Then the update function for $C_q$ is a map from $\{0,1\}^{2n}$ to $\{0,1\}^{2n}$ and is constructed by composing the update functions for each gate in $C_q$\footnote{This composition implicitly assumes that when an update function is applied, it acts non-trivially on the appropropriate bits, as indicated by the original circuit, and as the identity elsewhere.}

\begin{equation}
\begin{aligned}
&F_{C_q}=\{0,1\}^{2n} \longrightarrow \{0,1\}^{2n}\\
& F_{C_q}= f_{g_{{|C_q|}}}\circ  \cdots \circ f_{g_2} \circ f_{g_1}
\end{aligned}
\end{equation}

For each Clifford gate $g,$ the update function $f_g$ is defined below. Note how~$g$ relates to the $\xgate$ and $\zgate$ gates.
\begin{equation*}
\begin{aligned}
&\xgate (\xgate^b\zgate^a)\psi=(\xgate^b\zgate^a) \xgate \ket{\psi}  \mbox{ (update function) }  f_\xgate(a,b)=(a,b)\\
&\zgate (\xgate^b\zgate^a)\zgate=(\xgate^b\zgate^a) \zgate\ket{\psi}  \mbox{ (update function) } f_\zgate(a,b)=(a,b)\\
&\hgate (\xgate^b\xgate^a)\zgate=(\xgate^b\xgate^a) \hgate\ket{\psi}  \mbox{ (update function) } f_\hgate(a,b)=(b,a)\\
&\pgate (\xgate^b\xgate^a)\zgate=(\xgate^b\xgate^a) \pgate\ket{\psi}  \mbox{ (update function) } f_ \pgate(a,b)=(a,a\oplus b)\\
&\cnot(\xgate^{a_1}\zgate^{b_1}\otimes \xgate^{a_2}\zgate^{b_2})\ket{\psi}=(\xgate^{b_1}\zgate^{a_1\oplus a_2}\otimes \xgate^{b_1\oplus b_2} \zgate^{ b_2})\cnot(\ket{\psi}) \mbox{ (update function) }\\
& f_{\cnot}(a_1,b_1,a_2,b_2)=(a_1\oplus a_2,b_1,a_2, b_1\oplus b_2).
\end{aligned}
\end{equation*}


\section{Definitions}
\label{sec:definitions}
In this section, we provide a definition of perfectly equivalent quantum circuits (see \Cref{sec:perfectly-equivalent}), and  define our notion of quantum indistinguishability obfuscation for equivalent circuits (\Cref{sec:defs-qiO}). At the end of the section, we also make an observation about the existence of inefficient quantum indistinguishability obfuscation. Note that in~\Cref{sec:quantum:iO:approx:circuits}, we present our alternative definition for quantum indistinguishability obfuscation, applicable to the case where the circuits are  approximately equivalent.

\subsection{Perfectly Equivalent Quantum Circuits}
\label{sec:perfectly-equivalent}

\begin{definition}({\em Perfectly Equivalent Quantum Circuits}):
\label{def:qec}
Let $C_{q_0}$ and $C_{q_1}$ be two $n$-qubit quantum circuits. We say $C_{q_0}$ and $C_{q_1}$ are \emph{perfectly equivalent}  if
$$||C_{q_0}-C_{q_1}||_\diamond=0.$$
\end{definition}

\subsection{Indistinguishability Obfuscation for Quantum Circuits}
\label{sec:defs-qiO}
\begin{definition} ({\em Quantum Indistinguishability Obfuscation for Perfectly Equivalent Quantum Circuits}):
\label{def:QiO}
Let $\mathcal{C}_Q$ be a polynomial-time family of reversible quantum circuits. For $n\in\mathbb{N}$, let $C_{q^n}$ be the circuits in $\mathcal{C}_Q$ of input length $n.$
A  polynomial-time quantum algorithm for~$\mathcal{C}_Q$ is a  \emph{Computational/Statistical/Perfect } \emph{quantum indistinguishability obfuscator} ($qi\mathcal{O}$) if the following conditions hold:
\begin{enumerate}

\item {\tt Functionality:} There exists a negligible function ${\tt negl}(n)$ such that for every $C_q\in C_{q^n}$
$$(\ket{\phi}, C_q^\prime)\leftarrow qi\mathcal{O}(C_q)  \;  \mbox{ and }\;   ||C_q^\prime(\ket{\phi}, \cdot)-C_{q}(\cdot) ||_\diamond=0.$$
Where $\ket{\phi}$ is an $\ell$-qubit state, the circuits $C_q$ and $C_q^\prime$ are of type $(n,n)$ and $(m,n)$ respectively ($m= \ell +n$).\footnote{A circuit is of type $(i,j)$ if it maps $i$ qubits to $j$ qubits.}

\item  {\tt Polynomial Slowdown:} There exists a polynomial $p(n)$ such that for any $C_{q}\in C_{q^n},$
\begin{itemize}
\item  $\ell\leq p(|C_{q}|)$
\item $m \leq  p(|C_{q}|)$
\item $|C_{q}^\prime| \leq p(|C_{q}|).$
\end{itemize}

\item {\tt Computational/Statistical/Perfect Indistinguishability:} For any two perfectly equivalent quantum circuits $C_{q_1},C_{q_2}\in C_{q^n},$ of the same size, the two distributions $qi\mathcal{O}(C_{q_1})$ and $qi\mathcal{O}(C_{q_2})$ are (respectively) computationally/statistically/perfectly indistinguishable.			
\end{enumerate}
\end{definition}

\begin{remark}\label{re:ktime}
A subtlety that is specific to the quantum case is that \Cref{def:QiO} only requires that $(\ket{\phi}, C_q^\prime)$ enable a \emph{single} evaluation of $C_q$.  We could instead require a $k$-time functionality, which can be easily achieved by executing the single-evaluation scheme $k$ times in parallel. This justifies our focus here on the single-evaluation scheme.
\end{remark}

\begin{note}
\label{note:differences-AF16}
 As described in \Cref{sec:obf-quantum}, our \Cref{def:QiO} differs from \cite{AF16arxiv} as it requires security only in the case of equivalent quantum circuits (see \Cref{def:aQiO} for a definition that addresses this). Compared to~\cite{AF16arxiv}, we note that in this work we focus on unitary circuits only.\footnote{This is without loss of generality, since a $qi\mathcal{O}$ for a generalized quantum circuit can be obtained from a $qi\mathcal{O}$ for a reversible version of the circuit, followed by a trace-out operation (see~\Cref{sec:quantum-algo}).} Another difference is that the notion of indistinguishability (computational or statistical) in \cite{AF16arxiv} is more generous than ours, since it allows a finite number of inputs that violate the indistinguishability inequality. Since our work focuses on \emph{possibility} of obfuscations, our choice leads to the strongest results; equally, since \cite{AF16arxiv} focuses on impossibility, their results are strongest in their model.
We also note that that \cite{AF16arxiv} defines the efficiency of the obfuscator in terms of the number of qubits. We believe that our definition, which bounds the size of the output of the obfucation by a polynomial in the \emph{size} of the input circuit,  is more appropriate\footnote{It would be unreasonable to allow an obfuscator that outputs a circuit on $n$ qubits, but of depth super-polynomial in~$n$.} and follows the lines of the classical definitions.
 As far as we are aware, further differences in our definition are purely a choice of style. For instance, we do not include an \emph{interpreter} as in \cite{AF16arxiv}, but instead we let the obfuscator output a quantum circuit together with a quantum state; we chose this presentation since it provides a clear separation between the quantum circuit output by the $qi\mathcal{O}$ and the ``quantum advice state''.
\end{note}

\subsubsection{Inefficient Quantum Indistinguishability Obfuscators Exist}
Finally, we show a simple extension of a result in \cite{BGI+12}, which shows that  if we relax the requirement that the obfuscator be efficient, then information-theoretic indistinguishability obfuscation exists.

 \begin{claim}
 \label{claim:inefficient-qiO}
  Inefficient indistinguishability obfuscators exist for all circuits.
 \end{claim}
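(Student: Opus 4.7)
The plan is to adapt the classical BGI+12 construction, which outputs the lexicographically first circuit of minimum size computing the same function, to the quantum setting using the equivalence notion of \Cref{def:qec}. Because the obfuscator is allowed to be inefficient, we can afford to brute-force the canonicalization step, and because we only need \emph{perfect} indistinguishability (which implies the statistical and computational variants), it suffices to output a single fixed canonical representative for each equivalence class.

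Concretely, I would fix an enumeration $C^{(1)}, C^{(2)},\ldots$ of all reversible quantum circuits over the Clifford$+\tgate$ gate set, ordered first by size and then by some fixed lexicographic ordering on gate sequences. On input $C_q \in C_{q^n}$ of size $s=|C_q|$, the (inefficient) obfuscator $qi\mathcal{O}$ searches through this enumeration up to size $s$ and returns the pair $(\ket{\varnothing}, C_q^*)$, where $\ket{\varnothing}$ is the trivial zero-qubit auxiliary state and $C_q^*$ is the first circuit of size at most $s$ in the enumeration that is equivalent to $C_q$ in the sense of \Cref{def:qec}; such a $C_q^*$ always exists since $C_q$ itself is a candidate. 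The three requirements of \Cref{def:QiO} then follow immediately: (i) \emph{functionality} is exact, since $C_q^* \equiv C_q$ gives $\|C_q^*(\cdot) - C_q(\cdot)\|_\diamond = 0$; (ii) \emph{polynomial slowdown} holds for a linear $p$, since $|C_q^*|\le |C_q|$ and $\ell=0$, $m=n\le|C_q|$; and (iii) \emph{perfect indistinguishability} holds because any two equivalent $C_{q_1}, C_{q_2}$ of the same size get mapped to the same canonical representative, so the output distributions coincide.

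The only nontrivial point, and the main ``obstacle'' (though it is easily overcome), is that the canonicalization step requires deciding equivalence of quantum circuits. For the Clifford$+\tgate$ gate set this is decidable: each circuit implements a unitary whose matrix entries lie in $\mathbb{Z}[\tfrac{1}{\sqrt{2}},e^{i\pi/4}]$, so two circuits can be compared by computing their $2^n\times 2^n$ unitaries and checking equality up to a global phase using exact arithmetic. This brute-force comparison, combined with the enumeration over exponentially many candidate circuits of size at most $s$, is precisely what makes the obfuscator inefficient -- acceptable here because the claim only asserts existence, not polynomial-time complexity.
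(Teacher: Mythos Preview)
Your proposal is correct and follows essentially the same approach as the paper: the paper's one-line proof simply sets $qi\mathcal{O}(C_q)$ to be the lexicographically first circuit of size $|C_q|$ computing the same quantum map as $C_q$. Your version is more detailed (you make the auxiliary state explicit, verify the three conditions of \Cref{def:QiO}, and address decidability of equivalence), but the underlying idea is identical.
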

 \begin{proof}
 Let $qi\mathcal{O}(C)_q$  be the lexicographically first circuit of size $|C_q|$ that computes the same quantum map as $C_q$.
 \end{proof}


\section{Quantum Indistinguishability Obfuscation for Clifford Circuits}
\label{QiO:Clifford-Circuits}

Here, we show how to construct $qi\mathcal{O}$ for Clifford circuits with respect to definition \Cref{def:QiO}. The first construction (\Cref{sec:Clifford-iO-canonical}) is based on a canonical form, and the second is based on gate teleportation (\Cref{sec:Clifford-iO-teleportaion}).

\subsection{$qi\mathcal{O}$ for Clifford Circuits via a Canonical Form}
\label{sec:Clifford-iO-canonical}
Aaronson and Gottesman developped a polynomial-time algorithm that takes a Clifford circuit $C_q$ and outputs its canonical form (see~\cite{AG04}, section~VI), which is invariant for any two equivalent $n$-qubit circuits\footnote{Their algorithm outputs a canonical form (unique form) provided it runs on the standard initial tableau see pages 8-10 of \cite{AG04}.}.  Moreover the size of the canonical form remains polynomial in the size of the input circuit. Based on this canonical form, we define a $qi\mathcal{O}$ in \Cref{QiO:Canonical-Clifford}.
\begin{algorithm}[]
   \caption{$qi\mathcal{O}$-Canonical}
   \label{QiO:Canonical-Clifford}
  \begin{itemize}
\item Input: An $n$-qubit Clifford Circuit $C_q.$
 \begin{enumerate}
  \item   Using  the Aaronson and Gottesman algorithm~\cite{AG04}, compute the canonical form of $C_q$
   		\begin{equation*}
 		 C_q^\prime \xleftarrow{\mbox{canonical form}}C_q
 		 \end{equation*}
  \item Let $\ket{\phi}$ be an empty register.
  \item Output $\left(\ket{\phi},C_q^\prime \right)$.
  \end{enumerate}
  \end{itemize}
\end{algorithm}

 \begin{lemma}
\Cref{QiO:Canonical-Clifford} is a Perfect Quantum Indistinguishability Obfuscation for all Clifford Circuits.
\end{lemma}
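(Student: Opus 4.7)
The plan is to verify the three conditions of Definition~\ref{def:QiO} in turn, relying entirely on the properties of the Aaronson--Gottesman canonical form~\cite{AG04}. The main engine of the proof is the fact that their procedure deterministically maps any Clifford circuit $C_q$ to a unique representative $C_q^\prime$ of its equivalence class (under equality of the implemented unitary, up to global phase), in time polynomial in $|C_q|$.

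First I would address \emph{functionality}. Since $|\phi\rangle$ is an empty register, the obfuscated evaluation is simply $C_q^\prime(\ket{\psi})$. By the correctness of the canonical form, $C_q^\prime$ implements exactly the same unitary as $C_q$ (up to global phase, which does not affect any observable quantity), so $\lVert C_q^\prime(\ket{\phi},\cdot) - C_q(\cdot)\rVert_\diamond = 0$, which is trivially negligible. Next, \emph{polynomial slowdown} follows because the Aaronson--Gottesman algorithm runs in time polynomial in $|C_q|$, and therefore the output circuit $C_q^\prime$ has size polynomial in $|C_q|$; moreover the auxiliary state has $\ell = 0$ qubits and the input register has $m = n$ qubits, so all three bounds of Definition~\ref{def:QiO} are met.

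The heart of the proof is \emph{perfect indistinguishability}. Here I would use the key property of the canonical form: for any two Clifford circuits $C_{q_1}, C_{q_2}$ that are equivalent (implement the same unitary up to global phase), the algorithm produces \emph{identical} canonical representatives $C_{q_1}^\prime = C_{q_2}^\prime$. Since \Cref{QiO:Canonical-Clifford} is deterministic (pairing the canonical form with an empty register), the output distribution $qi\mathcal{O}(C_{q})$ is a point distribution supported on $(\ket{\phi}, C_q^\prime)$. Therefore $qi\mathcal{O}(C_{q_1})$ and $qi\mathcal{O}(C_{q_2})$ are the \emph{same} distribution, which is the strongest possible form of perfect indistinguishability (statistical distance zero, and also perfect indistinguishability of the quantum output by Definition~\ref{def:indis-quantum}).

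No step should present a real obstacle, since all the technical work is outsourced to~\cite{AG04}; the only point worth stating carefully is the correspondence between ``equivalent circuits'' in the sense of Definition~\ref{def:qec} (equality up to global phase) and the notion of equivalence for which Aaronson--Gottesman guarantee a unique canonical form. I would include a brief remark confirming that their procedure, when initialised on the standard tableau, is invariant precisely under this equivalence, so no subtlety about global phases breaks the argument.
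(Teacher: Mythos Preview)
Your proposal is correct and follows essentially the same approach as the paper's proof: both verify the three conditions of \Cref{def:QiO} directly from the properties of the Aaronson--Gottesman canonical form, observing that $\ket{\phi}$ is empty so functionality reduces to $C_q^\prime$ computing the same unitary, that polynomial slowdown is inherited from the polynomial-time canonical-form algorithm, and that perfect indistinguishability follows because equivalent Cliffords have identical canonical forms. Your additional remark about global phase and the standard initial tableau is a nice touch that the paper handles only via a footnote.
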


\begin{proof}
 We have to show that \Cref{QiO:Canonical-Clifford} satisfies the definition of a perfect quantum indistinguishability obfuscation (\Cref{def:QiO}) for all Clifford circuits.
 \begin{enumerate}
\item {\tt Functionality:} Since $\ket{\phi}$ is an empty register, it is a 0 qubit state ($\ell=0.$) The circuit $C_q^\prime$ is the canonical form of $C_q,$ therefore, it is also of type $(n,n)$ and has the same functionality as $C_q.$ We have $ ||C_q^\prime(\ket{\phi}, \cdot)-C_{q}(\cdot)||_\diamond =0\leq {\tt negl}(n)$ for any negligible function ${\tt negl}(n).$
\item  {\tt Polynomial Slowdown:} Note $C_q^\prime$ is constructed using Aaronson and Gottesman  algorithm~\cite{AG04}. Therefore, there exists a polynomial $q(\cdot)$ such that $|C_q^\prime|\leq q(|C_q|).$ Let $p(n)=q(n)+n$ then we clearly have
\begin{itemize}
\item $\ell \leq p(|C_{q}|)$
\item $n \leq  p(|C_{q}|)$
\item $|C_{q}^\prime| \leq p(|C_{q}|).$
\end{itemize}
\item {\tt Perfectly Indistinguishability:} Let  $C_{q_1},C_{q_2}\in C_{q^n},$ be any two equivalent Clifford circuits of the same size.  Let
$$(\ket{\phi_1}, C_{q_1}^\prime)\leftarrow qi\mathcal{O}\mbox{-Canonical}(C_{q_1}) \mbox{ and } (\ket{\phi_2}, C_{q_2}^\prime)\leftarrow qi\mathcal{O}\mbox{-Canonical}(C_{q_2}).$$
Since the canonical form of any two equivalent Clifford circuits are exactly the same, we have  $C_{q_1^\prime}=C_{q_2^\prime}.$ Moreover both $\ket{\phi_1}$ and $\ket{\phi_2}$ are empty registers we have $\ket{\phi_1}=\ket{\phi_2}.$ Then we have $qi\mathcal{O}\mbox{-Canonical}(C_{q_1})=qi\mathcal{O}\mbox{-Canonical}(C_{q_2}).$
Therefore, \Cref{QiO:Canonical-Clifford} is a perfect quantum indistinguishability obfuscation for all Clifford circuits.\qedhere
 \end{enumerate}
\end{proof}

\subsection{$qi\mathcal{O}$ for Clifford Circuits via Gate Teleportation}
\label{sec:Clifford-iO-teleportaion}
In this section, we  show how gate teleportation (see \Cref{algo:gate-teleport}) can be used to construct a quantum indistinguishability obfuscation for Clifford circuits. Our construction, given in \Cref{QiO:Clifford-teleportation}, relies on the existence of a quantum-secure~$i\mathcal{O}$ for classical circuits; however, upon closer inspection,
   our construction relies on the assumption that a quantum-secure classical $i\mathcal{O}$ exists for a very specific class of classical circuits\footnote{Circuits that compute update functions for Clifford circuits, see \Cref{update function}.}. In fact, it is easy to construct a perfectly secure $i\mathcal{O}$ for this class of circuits: like Clifford circuits, the circuits that compute the update functions also have a canonical form. Then the $i\mathcal{O}$ takes as input a Clifford circuit and outputs a canonical form of a classical circuit that computes the update function for~$C_q.$ The $i\mathcal{O}$ is described formally in \Cref{alg:classical-iO-Clifford}.

\begin{algorithm}[h!]
   \caption{$qi\mathcal{O}$ via Gate Teleportation for Clifford}
   \label{QiO:Clifford-teleportation}
  \begin{itemize}
  \item Input: An $n$-qubit Clifford Circuit $C_q.$
  \begin{enumerate}
  \item Prepare a tensor product of $n$  Bell states: $\ket{\beta^{ 2n}}=\ket{\beta_{00}}\otimes \cdots \otimes \ket{\beta_{00}}.$
  \item Apply the circuit $C_q$ on the right-most $n$ qubits to obtain a system $\ket{\phi}$:
  										 $$\ket{\phi}=(\igate_n\otimes C_q) \ket{\beta^{2n}}.$$
 \item Compute a classical circuit $C$ that computes the update function $F_{C_q}.$  The classical circuit $C$ can be computed in polynomial-time by \Cref{lemma: iO-clifford-functions}.
\item Set $C^\prime\leftarrow i\mathcal{O}(C)$, where $i\mathcal{O}(C)$ is a perfectly secure indistinguishability obfuscation defined in \Cref{sec: iO-clifford-functions}.													
  \item Description of the circuit  $C_q^\prime:$
 \begin{enumerate}
     \item  Perform a general Bell measurement on the leftmost $2n$-qubits on the system $\ket{\phi}\otimes \ket{\psi}$, where $\ket{\phi}$ is an auxiliary state and $\ket{\psi}$ is an input state. Obtain classical bits $(a_1,b_1\ldots,a_n,b_n)$ and the state
     \begin{equation}
     \label{bld:eq1:QiO-Clifford}
      C_q(\xgate^{\otimes_{i=1}^{n} b_{i}} \cdot \zgate^{\otimes_{i=1}^{n} a_{i}})\ket{\psi}.
      \end{equation}
        \item Compute the correction bits
       \begin{equation}
      \label{bld:eq2:QiO-Clifford}
    (a_1^\prime, b_1^\prime,\ldots, a_n^\prime, b_n^\prime)=C^\prime(a_1,b_1\ldots,a_n,b_n).
  \end{equation}
     \item Using the above, the correction unitary is
      $U^\prime=(\xgate^{\otimes_{i=1}^{n} b_{i}^\prime} \cdot \zgate^{\otimes_{i=1}^{n} a_{i}^\prime}).$
     \item Apply  $U^\prime$ to the system $C_q(\xgate^{\otimes_{i=1}^{n} b_{i}} \cdot \zgate^{\otimes_{i=1}^{n} a_{i}})\ket{\psi}$ to obtain the state $C_q(\ket{\psi}).$
     \end{enumerate}
  \item Output $\left(\ket{\phi},C_q^\prime \right).$
  \end{enumerate}
  \end{itemize}
\end{algorithm}
\pagebreak

\begin{theorem}\label{th:qio:cliff}
 \Cref{QiO:Clifford-teleportation} is a perfect quantum indistinguishability obfuscation for all Clifford Circuits.
\end{theorem}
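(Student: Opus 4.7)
The plan is to verify each of the three conditions of \Cref{def:QiO} in turn. The functionality and polynomial-slowdown parts are essentially unpacking the correctness of the gate-teleportation protocol (\Cref{algo:gate-teleport}); the real content lies in the indistinguishability argument, which combines two observations: the quantum advice $\ket{\phi}$ depends only on the underlying unitary $C_q$, and the classical update circuit is perfectly hidden by the $i\mathcal{O}$ of \Cref{sec: iO-clifford-functions}.

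\textbf{Functionality.} This follows from step (5) of the algorithm. When a generalized Bell measurement is applied to $\ket{\psi}$ together with the leftmost $n$ qubits of $\ket{\phi} = (\igate_n \otimes C_q)\ket{\beta^{2n}}$, the remaining register is left in the state $C_q (\xgate^{b_i}\zgate^{a_i})^{\otimes_{i=1}^n}\ket{\psi}$, exactly as in \Cref{prl:eq2:gate-teleport}. Because $C^\prime = i\mathcal{O}(C)$ preserves the functionality of $C$, it outputs the correct string $a_1^\prime b_1^\prime \ldots a_n^\prime b_n^\prime = F_{C_q}(a_1 b_1 \ldots a_n b_n)$, and applying the Pauli correction $(\zgate^{a_i^\prime}\xgate^{b_i^\prime})^{\otimes_{i=1}^n}$ yields $C_q(\ket{\psi})$ precisely, as in \Cref{prl:eq4:gate-teleport}. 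Hence the diamond distance to $C_q$ is zero.

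\textbf{Polynomial slowdown.} The auxiliary state uses $2n$ qubits. The classical circuit $C$ that computes $F_{C_q}$ is built by composing the per-gate update functions of \Cref{update function}, each of which has constant size; hence $|C| = O(|C_q|)$. The polynomial-slowdown property of the classical obfuscator from \Cref{sec: iO-clifford-functions} then gives $|C^\prime| \le \mathrm{poly}(|C_q|)$. The remaining ingredients of $C_q^\prime$ (a Bell measurement on $2n$ qubits, the evaluation of $C^\prime$, and a Pauli correction) all have size polynomial in $|C_q|$, so the total slowdown is polynomial.

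\textbf{Perfect indistinguishability.} Let $C_{q_1}, C_{q_2} \in C_{q^n}$ be equivalent Clifford circuits of the same size, and set $(\ket{\phi_i}, C_{q_i}^\prime) \leftarrow qi\mathcal{O}(C_{q_i})$. For the quantum part, equivalence means $C_{q_1}$ and $C_{q_2}$ denote the same unitary up to global phase, so
$$\ket{\phi_1} = (\igate_n \otimes C_{q_1})\ket{\beta^{2n}} = (\igate_n \otimes C_{q_2})\ket{\beta^{2n}} = \ket{\phi_2}.$$
For the classical part, the central claim is that $F_{C_{q_1}}$ and $F_{C_{q_2}}$ compute \emph{the same function}: the update function is determined by the action of Pauli conjugation $P \mapsto C_q P C_q^\dagger$ on the Pauli group, which depends only on the Clifford unitary and not on the particular gate decomposition chosen to represent it. Therefore $C_1$ and $C_2$ have identical input-output behaviour, and by the perfect indistinguishability of the classical $i\mathcal{O}$ for update-function circuits (\Cref{sec: iO-clifford-functions}, which outputs a canonical form), the obfuscated circuits $C^\prime_1$ and $C^\prime_2$ are identical as distributions. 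Combining the two parts, the pairs $(\ket{\phi_1}, C_{q_1}^\prime)$ and $(\ket{\phi_2}, C_{q_2}^\prime)$ are identically distributed.

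The main obstacle is the indistinguishability step, and specifically the invariance $F_{C_{q_1}} = F_{C_{q_2}}$ as functions on $\{0,1\}^{2n}$: although this seems obvious once one views $F_{C_q}$ as encoding Pauli conjugation, it is not immediate from the inductive gate-by-gate definition of \Cref{update function}, and must be stated explicitly to bridge the gap to the perfect indistinguishability of the underlying classical $i\mathcal{O}$. Everything else reduces to a routine invocation of the correctness of gate teleportation and of the canonical-form $i\mathcal{O}$ used as a subroutine.
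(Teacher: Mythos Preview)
Your proof is correct and follows the same three-part structure as the paper's proof: functionality via gate teleportation, polynomial slowdown by bounding each component, and perfect indistinguishability by showing both the quantum state and the classical obfuscated circuit are identical for equivalent Cliffords.

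The one noteworthy difference is in how you justify $F_{C_{q_1}} = F_{C_{q_2}}$. The paper defers this to a separate lemma (\Cref{lem:clifford-functions}), proved by contradiction: if the update functions differed on some input, one derives two distinct Pauli corrections applied to the same state $C_{q_1}(\ket{\psi})$, contradicting equivalence. Your argument is more direct and conceptual: the update function encodes the conjugation action $P \mapsto C_q P C_q^\dagger$ on the Pauli group, which manifestly depends only on the unitary and not on the gate decomposition. Both are valid; yours is shorter and arguably more illuminating, while the paper's contradiction argument stays closer to the inductive definition in \Cref{update function} and avoids appealing to the group-theoretic picture.
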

\begin{proof} We have to show that \Cref{QiO:Clifford-teleportation} satisfies \Cref{def:QiO}.
\begin{enumerate}
\item  {\tt Functionality:} Let $C_q$ be an $n$-qubit Clifford Circuit  and $\left(\ket{\phi},C_q^\prime \right)$ be the output of the \Cref{QiO:Clifford-teleportation} on input $C_q.$ On input $(\igate_n\otimes C_q) \ket{\beta^{2n}}$ and $\ket{\psi}$ the circuit $C_q^\prime$ outputs the state $C_q(\ket{\psi})$ (this follows from the principle of gate teleportation). Therefore $C_q^\prime(\ket{\phi}, \cdot)=C_q(\psi)$, which implies that
$||C_q^\prime(\ket{\phi}, \cdot)-C_{q}(\cdot) ||_\diamond=0\leq {\tt negl}(n)$ for any negligible function ${\tt negl}(n).$

\item {\tt Polynomial Slowdown:} Note $\ell=2n$ (the number of qubits in $\ket{\phi}$) and $C_q^\prime$ is a circuit of type $(3n,n).$ The size of the circuit $|C_q^\prime|= |i\mathcal{O}(C)|+|\mbox{Bell measurement}|$, where $C$ is the classical circuit that computes the update function corresponding to $C_q.$ The size of a Bell measurement circuit for an $O(n)$-qubit state is $O(n).$ Therefore, there exists a polynomial $q(|C|)$ such that $|\mbox{Bell measurement}|\leq q(|C|)|.$ The size of $|i\mathcal{O}(C)|$ is at most $r(|C|)$ for some polynomial $r(\cdot)$ (\Cref{lemma: iO-clifford-functions}). Further, the size of $|C|$ is at most $s(|C_q|)$ for some polynomial $|C_q|$ (\Cref{lemma: iO-clifford-functions}).  By setting $p(|C|)=2|C_q|+q(|C|)+r(s(|C|)),$ we have

\begin{itemize}
\item $\ell\leq p(|C|)$
\item $m=3n \leq p(|C|)$
\item $|C_q^\prime|= |i\mathcal{O}(C)|+|\mbox{Bell measurement}|\leq p(|C|).$
\end{itemize}

\item {\tt Perfect Indistinguishability:} Let $C_{q_1}$ and $C_{q_2}$ be two $n$-qubit equivalent Clifford circuits of the same size. Let $\left(\ket{\phi_1},C_{q_1}^\prime \right)$ and $\left(\ket{\phi_2},C_{q_2}^\prime \right)$ be the outputs of \Cref{QiO:Clifford-teleportation} on inputs $C_{q_1}$ and $C_{q_2}$ respectively.
Since  $C_{q_1}(\ket{\tau})=C_{q_2}(\ket{\tau})$ for every quantum state $\ket{\tau}$ we have,

\begin{equation}
\ket{\phi_1}=(I\otimes C_{q_1}) \ket{\beta^{2n}}=(I\otimes C_{q_2}) \ket{\beta^{2n}}=\ket{\phi_2}.
\end{equation}
 The update functions for any two equivalent Clifford circuits are equivalent (\Cref{lem:clifford-functions}), further the classical $i\mathcal{O}$ that obfuscates the update functions (circuits) is perfectly indistinguishable for any two equivalent Clifford circuits (not necessarily of the same size) (\Cref{lemma: iO-clifford-functions}). Therefore, $C_{q_1}^\prime$ and $C_{q_2}^\prime$ are perfectly indistinguishable. \qedhere
\end{enumerate}
\end{proof}

\begin{lemma}\label{lem:clifford-functions}
Let $C_{q_1}$ and $C_{q_2}$ be two equivalent $n$-qubit Clifford circuits. Then their corresponding update functions are also equivalent.
\end{lemma}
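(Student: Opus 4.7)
The plan is to recast the update function $F_{C_q}$ in a phase-free fashion as the function that records how $C_q$ acts by conjugation on the tensor-Pauli group. Concretely, by induction on the gate sequence using the per-gate update rules given in \Cref{update function}, one shows that $F_{C_q}(a_1 b_1, \ldots, a_n b_n) = (a_1^\prime b_1^\prime, \ldots, a_n^\prime b_n^\prime)$ if and only if
$$C_q \bigotimes_{i=1}^n \left(\xgate^{b_i} \zgate^{a_i}\right) C_q^{-1} = \omega \bigotimes_{i=1}^n \left(\xgate^{b_i^\prime} \zgate^{a_i^\prime}\right)$$
for some phase $\omega \in \{\pm 1, \pm i\}$. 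Such an expression always exists because, by definition, Cliffords normalise the Pauli group; the update function then simply reads off the image's Pauli exponents and discards $\omega$.

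First I would establish this characterisation: the induction base is a single-gate check against the rules in \Cref{update function}, and the inductive step follows because conjugation is a group homomorphism, so composing gates corresponds to composing update functions. Alternatively, one can read the characterisation off the gate-teleportation correction step directly, where $U_{F_{C_q}} \cdot C_q \cdot (\xgate^{b}\zgate^{a})^{\otimes n} = C_q$ (up to global phase) forces $(a_i^\prime, b_i^\prime)$ to be precisely the Pauli exponents of the conjugated operator.

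Once the characterisation is in hand, the lemma is almost immediate. If $C_{q_1}$ and $C_{q_2}$ are equivalent, then by \Cref{def:qec} they agree as unitaries up to global phase, so there exists $e^{i\theta}$ with $C_{q_1} = e^{i\theta} C_{q_2}$. Conjugation is manifestly invariant under multiplication of a unitary by a scalar of modulus one, so $C_{q_1} P C_{q_1}^{-1} = C_{q_2} P C_{q_2}^{-1}$ for every Pauli $P$. Taking $P = \bigotimes_i \xgate^{b_i} \zgate^{a_i}$ and reading off Pauli exponents via the characterisation yields $F_{C_{q_1}}(a,b) = F_{C_{q_2}}(a,b)$ on every input, so the two update functions are literally equal, hence a fortiori equivalent as classical functions.

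The only genuine obstacle is the inductive bookkeeping needed to pin down the conjugation-based characterisation while tracking phases cleanly: the per-gate rules introduce signs (via relations like $\hgate\xgate\hgate=\zgate$ and $\xgate\zgate=-\zgate\xgate$) that must be absorbed into $\omega$ rather than into the recorded exponents $(a_i^\prime,b_i^\prime)$. Once those cancellations are verified once and for all on the generating gate set, the lemma reduces to the trivial observation that global phases disappear under conjugation.
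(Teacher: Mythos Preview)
Your argument is correct and takes a somewhat different, more structural route than the paper. The paper proceeds by contradiction at the level of states: assuming $F_{C_{q_1}}(\mathbf{s})\neq F_{C_{q_2}}(\mathbf{s})$ for some $\mathbf{s}$, it applies both circuits to $(\xgate^{b_i}\zgate^{a_i})^{\otimes n}\ket{\psi}$, rewrites each side using its update function as a Pauli correction applied to $C_{q_j}\ket{\psi}$, replaces $C_{q_2}\ket{\psi}$ by $C_{q_1}\ket{\psi}$, and observes that distinct Pauli strings cannot coincide on all $\ket{\psi}$. You instead first \emph{characterise} $F_{C_q}$ as the map recording the Pauli exponents of $C_q P C_q^{-1}$ modulo the phase $\omega$, and then note that conjugation is invariant under the global phase $e^{i\theta}$ relating $C_{q_1}$ and $C_{q_2}$. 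Your approach is cleaner conceptually and makes explicit why phases never matter; the paper's approach avoids the inductive bookkeeping entirely and goes straight to the contradiction, at the cost of leaving the conjugation picture implicit. Both are short once set up, and your version has the side benefit of exposing exactly the mechanism (phase-invariance of conjugation) that generalises to the later $n$-qubit non-Clifford lemma.
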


\begin{proof}
Let $F_{C_{q_1}}$ and $F_{C_{q_2}}$ be the update functions for two $n$-qubit Clifford circuits  $C_{q_1}$ and $C_{q_2}$ respectively. Suppose $F_{C_{q_1}}\neq F_{C_{q_2}},$  then there must exist at least one binary string ${\bf s}=a_1b_1 \ldots  a_n b_n \in \{0,1\}^{2n} $ such that
\begin{equation}
\label{bld:equ1:QiO-Clifford}
F_{C_{q_1}}({\bf s})\neq F_{C_{q_2}}({\bf s})
\end{equation}

Since $C_{q_1}$ and $C_{q_2}$  are equivalent circuit we must have that for every quantum state $\ket{\psi}$:
\begin{equation}
\label{bld:equ2:QiO-Clifford}
C_{q_1} (\xgate^{\otimes_{i=1}^{n} b_{i}} \cdot \zgate^{\otimes_{i=1}^{n}a_{i}})\ket{\psi}=C_{q_2} (\xgate^{\otimes_{i=1}^{n} b_{i}} \cdot \zgate^{\otimes_{i=1}^{n}a_{i}})\ket{\psi}
\end{equation}
Let $F_{C_{q_1}}({\bf s})=(a_1^\prime b_1^\prime,\dots a_n^\prime b_n^\prime)$ and $F_{C_{q_2}}({\bf s})=(d_1^\prime e_1^\prime, \dots, d_n^\prime e_n^\prime),$ then we can rewrite \Cref{bld:equ2:QiO-Clifford} as

\begin{equation}
\label{bld:eq3:QiO-Clifford}
 (\xgate^{\otimes_{i=1}^{n} b_{i}^\prime} \cdot \zgate^{\otimes_{i=1}^{n}a_{i}^\prime})C_{q_1} (\ket{\psi})=(\xgate^{\otimes_{i=1}^{n} e_{i}^\prime} \cdot \zgate^{\otimes_{i=1}^{n}d_{i}^\prime})C_{q_2}(\ket{\psi}).
\end{equation}

We can replace $C_{q_2} (\ket{\psi})$ with $C_{q_1} (\ket{\psi})$ in \Cref{bld:eq3:QiO-Clifford}
\begin{equation}
\label{bld:eq4:QiO-Clifford}
(\xgate^{\otimes_{i=1}^{n} b_{i}^\prime} \cdot \zgate^{\otimes_{i=1}^{n}a_{i}^\prime})C_{q_1} (\ket{\psi})=(\xgate^{\otimes_{i=1}^{n} e_{i}^\prime} \cdot \zgate^{\otimes_{i=1}^{n}d_{i}^\prime})C_{q_1}(\ket{\psi}).
\end{equation}

Now if there exists a $j$ such that $a_j^\prime \neq d_j^\prime$ or  $b_j^\prime \neq e_j^\prime,$ then \Cref{bld:eq4:QiO-Clifford} does not hold.  This contradicts the assumption that $C_{q_1}$ and $C_{q_2}$ are equivalent Clifford circuits. Therefore $F_{C_{q_1}}$ and $F_{C_{q_2}}$ are equivalent functions.\footnote{The \Cref{bld:eq3:QiO-Clifford}  is derived from the assumption that $C_{q_1}$ and $C_{q_2}$ are equivalent Clifford circuits.}
\end{proof}

\subsubsection{Indistinguishability Obfuscator for Clifford: Update Functions}
 \label{sec: iO-clifford-functions}
 Here, we describe a perfect indistinguishability obfuscator $i\mathcal{O}$ for the update functions corresponding to the Clifford circuits. The algorithm takes an $n$-qubit Clifford circuit $C_q$ and output a classical circuit $C$ that computes the update function~$F_{C_q}.$ The circuit $C$ is invariant for any two equivalent Clifford circuits. The main idea here is to compute the canonical form for the Clifford circuit, and then compute the update function for the canonical form.

\begin{algorithm}[]
\caption{$i\mathcal{O}$ for Clifford: Update Functions.}
\label{alg:classical-iO-Clifford}
  \begin{enumerate}
  \item Compute the canonical form $C_q$ using the algorithm presented in ~\cite{AG04} (section VI). Denote the canonical form as $\widehat{C}_q.$
  \item Let $g_1,g_2, \dots, g_m$ be a topological ordering of the gates in  $\widehat{C}_q,$ where $m=|{\widehat{C}_q}|.$
 \item Construct the classical circuit $\hat{C}$ that computes the update function $F_{\widehat{C}_q}$ as follows. For $i=1$ to $m$, implement the update rule for each gate $g_i$ (\Cref{update function}).
  \item Output the classical circuit $\hat{C}.$
  \end{enumerate}
\end{algorithm}

\begin{lemma}\label{lemma: iO-clifford-functions}
\Cref{alg:classical-iO-Clifford} is a perfect classical indistinguishability obfuscator for the Clifford update functions.
\end{lemma}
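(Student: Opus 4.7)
The plan is to verify the three conditions of \Cref{def:quantum-secureiO} for \Cref{alg:classical-iO-Clifford}: functionality, polynomial slowdown, and (perfect) indistinguishability. The argument mirrors the one used in the canonical-form construction for Clifford obfuscation (\Cref{QiO:Canonical-Clifford}), now applied one level down at the update-function layer, and leans on the two external facts already in hand: correctness and canonicity of the Aaronson--Gottesman procedure~\cite{AG04}, and the fact (\Cref{lem:clifford-functions}) that equivalent Cliffords induce equivalent update functions.

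For \textbf{functionality}, I would argue that on input $C_q$ the output $\hat C$ computes $F_{C_q}$. The Aaronson--Gottesman procedure returns a Clifford $\widehat{C}_q$ that is equivalent to $C_q$, so by \Cref{lem:clifford-functions} the update functions coincide, $F_{\widehat{C}_q}=F_{C_q}$. Step~3 then builds $\hat C$ by walking through the gates of $\widehat{C}_q$ in topological order and composing the per-gate update rules listed in \Cref{update function}; by definition of $F_{\widehat{C}_q}$ as exactly that composition, $\hat C$ computes $F_{\widehat{C}_q}=F_{C_q}$.

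For \textbf{polynomial slowdown}, I would use that the Aaronson--Gottesman algorithm runs in polynomial time and outputs a canonical form of polynomial size, $|\widehat{C}_q|\le q_1(|C_q|)$ for some polynomial $q_1$. Each per-gate update rule from \Cref{update function} is a constant-size classical circuit acting on at most four bits (XORs, copies, and swaps), so the assembled circuit satisfies $|\hat C|\le c\cdot|\widehat{C}_q|\le \mathrm{poly}(|C_q|)$. For \textbf{perfect indistinguishability}, I would use that the canonical form is a functional invariant: any two equivalent Cliffords $C_{q_1},C_{q_2}$ of the same size satisfy $\widehat{C}_{q_1}=\widehat{C}_{q_2}$ as syntactic circuits. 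Since Step~3 is a deterministic function of this canonical form (once a fixed tie-breaking rule for the topological ordering and wire labeling is specified), the two output circuits are bit-for-bit identical, yielding perfect indistinguishability trivially.

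The only mildly delicate point is the last one: one must make sure that Step~3 is truly a deterministic function of $\widehat{C}_q$, i.e.\ that no implicit randomness or input-dependent choice leaks through the topological ordering, the wire labels, or the exact rendering of the constant-size update gadgets. Once that is pinned down -- essentially by fixing a canonical ordering on wires and a canonical sub-circuit per gate type -- the three conditions follow essentially for free from \cite{AG04} and \Cref{lem:clifford-functions}, and I do not expect any genuinely hard step.
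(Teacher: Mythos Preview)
Your proposal is correct and follows essentially the same approach as the paper's proof: both verify the three conditions of \Cref{def:quantum-secureiO} by (i) invoking \Cref{lem:clifford-functions} to identify $F_{C_q}$ with $F_{\widehat{C}_q}$, (ii) bounding $|\hat C|$ via the constant-size per-gate update gadgets and the polynomial bound on $|\widehat{C}_q|$ from~\cite{AG04}, and (iii) using canonicity of $\widehat{C}_q$ together with determinism of Step~3 to get identical outputs. Your explicit remark that Step~3 must be made a deterministic function of the canonical form (fixing a tie-breaking rule for the topological order and gadget rendering) is a point the paper leaves implicit; the paper, in turn, highlights that perfect indistinguishability here holds even for equivalent Cliffords of \emph{different} sizes, which you may wish to note as well.
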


\begin{proof} We have to show that \Cref{alg:classical-iO-Clifford} satisfies \Cref{def:quantum-secureiO}.
\begin{enumerate}
\item {\tt Functionality:} Let $C_q$ be an $n$-qubit Clifford circuit and $C$ be the circuit that computes $F_{C_q}$ (the update function for $C_q).$ Let $\widehat{C}_q$ be the canonical form of  $C_q.$ Since $C_q$ and ${C_q}^\prime$ are equivalent circuits, it follows from \cref{lem:clifford-functions} that  $F_{C_q}$ and $F_{C_q}^\prime$ are equivalent functions, therefore any circuit that computes $F_{C_q}^\prime$ also computes $F_{C_q}.$ From the construction above (\Cref{alg:classical-iO-Clifford}) it follows that the circuit $\hat{C}$ computes $F_{C_q}^\prime$ therefore  $\hat{C}(x)= C(x)$ for all inputs~$x.$

\item  {\tt Polynomial Slowdown:}  For each gate $g_i$ in $\widehat{C}_q$, the classical circuit $\widehat{C}$ has to implement one of the following operations (\Cref{update function}):
 \begin{itemize}
 \label{cost}
 \item[] $(a,b)\xrightarrow{\xgate, \zgate}(a,b).$
 \item[]  $(a,b)\xrightarrow{\hgate}(b,a)$         (one swap).
 \item[]  $(a,b)\xrightarrow{\pgate}(a,a\oplus b)$      (one $\oplus$ operation).
 \item[]  $(a_1,b_1,a_2,b_2)\xrightarrow{\mbox{\cnot}}(a_1\oplus a_2,b_1,a_2, b_1\oplus b_2)$ (two $\oplus$ operations).
 \end{itemize}
 Therefore the size of $|\widehat{C}|$ can be at most be $O(|\widehat{C}_q|).$ The $\widehat{C}_q$ is a canonical form of $C_q$ and of at most $q(|C_q|)$ for some polynomial $q(\cdot)$ \cite{AG04}.  Therefore, there exists a polynomial $p(\cdot)$ such that $|\widehat{C}|\leq p(|C_q|).$

\item {\tt Perfectly Indistinguishability:}  Let $C_{q_1}, C_{q_2}$ be two equivalent $n$-qubit Clifford circuits (not necessarily of the same size) and $\widehat{C_{q}}$ be their canonical form. Note that the output of  \cref{alg:classical-iO-Clifford} only depends on the canonical form of the input Clifford circuit. Since, $C_{q_1}$ and $C_{q_2}$ have the same canonical form we have
												$$\hat{C}\leftarrow\Cref{alg:classical-iO-Clifford}(\widehat{C_{q}})=\Cref{alg:classical-iO-Clifford}(C_{q_1})=\Cref{alg:classical-iO-Clifford}(C_{q_2}),$$
Therefore, \Cref{alg:classical-iO-Clifford} is a perfect indistinguishability obfuscation for the Clifford update functions.\qedhere
\end{enumerate}
\end{proof}
\begin{remark}
\label{re:iO:sizes}
What is convenient about the $i\mathcal{O}$ of \Cref{alg:classical-iO-Clifford} is that it works for any two equivalent Clifford circuits (regardless of their relative sizes) (see \Cref{lem:clifford-functions}). However, we can use any perfectly secure $i\mathcal{O}$ in our construction with some care. Suppose $i\mathcal{O}$ is some perfectly secure indistinguishability obfuscator for classical circuits (for Clifford update functions) of the same size. Suppose we want to obfuscate an update circuit corresponding to some Clifford~$C_q$. The classical circuit~$C$ is constructed by going through each gate in~$C_{q}.$ Some gates are more costly than others (for \emph{e.g.}, $\cnot$ vs. $\zgate,$ see proof of \Cref{th:qio:cliff} or \Cref{update function}). Since we assume all Clifford circuits are of the same size, we can obtain an upper bound on all the classical circuits (for the update functions) by replacing each gate in $C_q$ with the most costly gate and then computing the classical circuit for the resulting quantum gate. Now suppose $m$ is the upper bound on the size of classical circuits, then for any circuit $C_q,$ we first calculate the circuit $C$ that computes $F_{C_q}$ and then pad $C$ with $m-|C|$ identity gates. This will ensure that if $|C_{q_1}|=|C_{q_2}|,$ then $|C_1|=|C_2|.$
\end{remark}


\section{Obfuscating Beyond Clifford Circuits}
\label{QiO:Clifford+T:family}
In this section, we extend the gate teleportation technique to show how we can construct $qi\mathcal{O}$ for \emph{any} quantum circuit. Our construction is efficient as long as the circuit has $\tgate$-count at most logarithmic  in the circuit size.  For the sake of simplicity, we first construct a $qi\mathcal{O}$ for an arbitrary 1-qubit quantum circuit (\Cref{sec:1-qubit}), then  extend the 1-qubit construction to any $n$-qubit quantum circuit (\Cref{sec:n-qubit-circuits}).

We first start with some general observations on quantum circuits which are relevant to this section. Consider the application of the $\tgate$-gate on an encrypted system using the quantum one-time pad. The following equation relates  the $\tgate$-gate to the  $\xgate$- and  $\zgate$-gates:
\begin{equation}
\label{eq:pgate}
\tgate \xgate^b \zgate^a=\xgate^b \zgate^{a\oplus b}\pgate^b \tgate\,.
\end{equation}
If $b=0,$ then  $\pgate^b$   is the identity; otherwise we have a $\pgate$-gate correction. This is undesirable as $\pgate$ does not commute with $\xgate$, making the update of the encryption key $(a,b)$ complicated (since it is no longer a tensor product of Paulis). Note that we can write
$\pgate=\left(\frac{1+i}{2}\right) \igate + \left(\frac{1-i}{2}\right)\zgate$,
therefore  \Cref{eq:pgate} can be rewritten as:
\begin{equation}
\label{eq:pgate1}
\tgate \xgate^b \zgate^a=\xgate^b \zgate^{a\oplus b}\left[\left(\frac{1+i}{2}\right) \igate + \left(\frac{1-i}{2}\right)\zgate\right]^b \tgate
\end{equation}
Since  $\left[\left(\frac{1+i}{2}\right) \igate + \left(\frac{1-i}{2}\right)\zgate\right]^b=\left(\frac{1+i}{2}\right) \igate + \left(\frac{1-i}{2}\right)\zgate^b$ for $b \in \{0,1\},$ we can rewrite  \Cref{eq:pgate1} as,
\begin{equation}
\label{eq:pgate2}
\begin{aligned}
\tgate \xgate^b \zgate^a=\xgate^b \zgate^{a\oplus b}\left[\left(\frac{1+i}{2}\right) \igate + \left(\frac{1-i}{2}\right)\zgate^b\right] \tgate \\
=\left[\left(\frac{1+i}{2}\right) \xgate^b \zgate^{a\oplus b} + \left(\frac{1-i}{2}\right)\xgate^b \zgate^a\right] \tgate. \\
\end{aligned}
\end{equation}

It follows from \Cref{eq:pgate2} that for any $a,b\in\{0,1\},$ we can represent $\tgate \xgate^b \zgate^a$ as a linear combination of $\xgate$ and $\zgate.$
\begin{equation}
\label{eq:pgate3}
\begin{aligned}
\tgate \xgate^b \zgate^a= (\alpha_1 \igate +  \alpha_2 \xgate + \alpha_3 \zgate + \alpha_4 \xgate\zgate)\tgate \\
\end{aligned}
\end{equation}
where $\alpha_j \in\left\{0,1, \frac{1+i}{2}, \frac{1-i}{2}\right\},$ for  $j\in[4].$

We further note that for a general $n$-qubit quantum unitary $U$ and $n$-qubit Pauli~$P$, there exists a Clifford $C$ such that $UP\ket{\psi} = CU \ket{\psi}$. This is due to the \emph{Clifford hierarchy} \cite{GC99}. We also mention that if an $n$-qubit Clifford operation is given in matrix form, an efficient procedure exists in order to produce a circuit that executes this Clifford\cite{NWD14}. This is a special case of the general problem of \emph{synthesis} of quantum circuits, which aims to produce quantum circuits, based on an initial description of a unitary operation.

\subsection{Single-Qubit Circuits}
\label{sec:1-qubit}

Here, we show an indistinguishability obfuscation for single-qubit circuits. As previously mentionned, we note that for the single-qubit case, an efficient indistinguishability obfuscation can also be built using the Matsumoto-Amano normal form~\cite{MA08arxiv,GS19arxiv}.  Here, we give an alternate construction based on gate teleportation.
Let $C_q$ be a $1$-qubit circuit we want to obfuscate and $\ket{\psi}$ be the quantum state on which we want to evaluate $C_q.$ Note that we can write any $1$-qubit circuit as a sequence of gates from the set $\{\hgate, \tgate\}$\footnote{The set $\{\hgate,  \tgate\}$ is universal for 1-qubit unitaries \cite{KLM07}.}
$$C_q=(g_{|C_q|},\ldots,g_2,g_1 ),\; g_i\in\{\hgate, \tgate\}\,.$$
For the indistinguishability obfuscation of a single-qubit circuit, we use the gate teleportation protocol (\Cref{algo:gate-teleport}), which leaves us (after the teleportation) with a subsystem of the form $C_q \xgate^b\zgate^a (\ket{\psi})$
\begin{equation}
\label{eq:pgate4}
C_q \xgate^b\zgate^a (\ket{\psi})=(g_{|C_q|},\ldots,g_2,g_1 )\xgate^b\zgate^a (\ket{\psi}),
\end{equation}
and to evaluate the circuit on $\ket{\psi}$, we have to apply a correction unitary. Now suppose we apply the gate $g_1$. We can write the system in \Cref{eq:pgate4} as
\begin{equation}
\begin{aligned}
\label{eq:pgate5}
C_q \xgate^b\zgate^a (\ket{\psi})=(g_{|C_q|},\ldots,g_2)(\alpha_0 \igate +  \alpha_1 \xgate + \alpha_2 \zgate + \alpha_3 \xgate \zgate)g_1 (\ket{\psi})
\end{aligned}
\end{equation}
 where $\alpha_i \in\left\{0,1, \frac{1+i}{2}, \frac{1-i}{2}\right\}.$ Since $\{\igate, \xgate, \zgate, \xgate \zgate\},$ forms a basis, after applying the remaining gates in the sequence  $(g_{|C_q|},\ldots,g_3,g_2),$ we can write \Cref{eq:pgate5} as
\begin{equation}
\begin{aligned}
\label{eq:pgate8}
C_q \xgate^b\zgate^a (\ket{\psi})=(\beta_1 \igate +  \beta_2 \xgate + \beta_3 \zgate + \beta_4\xgate \zgate)(g_{|C_q|},\ldots,g_2,g_1 )(\ket{\psi})
\end{aligned}
\end{equation}
where each $\beta_i\in\mathbb{C}$ and is computed by multiplying and adding numbers from the set $\{0,1,\frac{1+i}{2},\frac{1-i}{2}\}.$ We show in \Cref{coeff:size} that the size of the coefficients~$\beta_i$  grows at most as a polynomial in the number of $\tgate$-gates. Therefore it follows from \Cref{eq:pgate8} that the update function for any $1$-qubit circuit $C_q$ can be defined as the following map,
\begin{equation*}
F_{C_q}:\{0,1\}^2\rightarrow \mathbb{C}^4,\; (a,b)\mapsto (\beta_1, \beta_2,\beta_3,\beta_4),
\end{equation*}
and is in one-to-one correspondence with the correction unitary $\beta_1 \igate +  \beta_2 \xgate + \beta_3 \zgate + \beta_4\xgate \zgate.$
As indicated, our construction for 1-qubit circuits is nearly the same as the gate teleportation scheme for Clifford circuits (\Cref{QiO:Clifford-teleportation}). The proof that this is a $qi\mathcal{O}$ scheme is also very similar to the proof for the Clifford construction (\Cref{sec:Clifford-iO-teleportaion}); we thus omit the formal proof here (it can also be seen as a special case of the proof of \Cref{theorem:main}). Some subtleties, however are addressed below: the equivalence of the update functions (\Cref{lemma:1qubit}) and the circuit synthesis (\Cref{lem:1-qubit-synthesis}).

\begin{lemma}
\label{lemma:1qubit}
Let $C_{q_1}$ and $C_{q_2}$ be two equivalent $1$-qubit circuits. Then their corresponding update functions in the gate teleportation protocol are also equivalent.
\end{lemma}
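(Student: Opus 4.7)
The plan is to follow the strategy of \Cref{lem:clifford-functions} (the Clifford analog), adapted to the fact that here the update function takes values in $\mathbb{C}^4$ rather than $\{0,1\}^2$. For $j\in\{1,2\}$, I would set $F_{C_{q_j}}(a,b) = (\beta_1^{(j)}, \beta_2^{(j)}, \beta_3^{(j)}, \beta_4^{(j)})$. Unfolding the defining Pauli-basis expansion of the single-qubit update function, this is equivalent to the operator identity
$$C_{q_j}\, \xgate^b \zgate^a \;=\; \Bigl(\beta_1^{(j)} \igate + \beta_2^{(j)} \xgate + \beta_3^{(j)} \zgate + \beta_4^{(j)}\, \xgate\zgate\Bigr)\, C_{q_j}$$
as $2\times 2$ matrices, for every $(a,b)\in\{0,1\}^2$.

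Next I would invoke equivalence of $C_{q_1}$ and $C_{q_2}$, which by the paper's convention means they are equal as matrices up to a global phase $e^{i\theta}$. Substituting $C_{q_1} = e^{i\theta} C_{q_2}$ into the $j=1$ identity, comparing with the $j=2$ identity, and cancelling the invertible unitary $C_{q_2}$ from the right yields
$$\sum_{k=1}^{4} \beta_k^{(1)} P_k \;=\; e^{i\theta} \sum_{k=1}^{4} \beta_k^{(2)} P_k,$$
where $(P_1, P_2, P_3, P_4) = (\igate, \xgate, \zgate, \xgate\zgate)$. The key structural fact is that this set is a basis of $M_2(\mathbb{C})$ (it is the standard Pauli basis up to the identification $\xgate\zgate = -i\ygate$), so equating coefficients gives $\beta_k^{(1)} = e^{i\theta}\beta_k^{(2)}$ for each $k\in[4]$. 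Since $(a,b)$ was arbitrary, the two update functions agree pointwise up to the common global phase $e^{i\theta}$, which is exactly the notion of equivalence needed here.

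The main subtlety — rather than a genuine obstacle — is the bookkeeping of this global phase: the two $\mathbb{C}^4$-tuples match only up to a common scalar $e^{i\theta}$. This is harmless, because downstream the update function is used only to assemble the correction operator $\sum_k \beta_k P_k$, which is itself determined only up to a global phase; thus the natural notion of equivalence for update functions inherits the same phase-equivariance as circuit equivalence. In spirit, the argument is the complex-valued lift of the Clifford case in \Cref{lem:clifford-functions}: the role played there by linear independence of the Pauli operators $\{\xgate^b \zgate^a\}_{a,b}$ is now played by linear independence of the same four matrices $\{\igate, \xgate, \zgate, \xgate\zgate\}$ viewed as a basis of $M_2(\mathbb{C})$.
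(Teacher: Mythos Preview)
Your approach is correct and essentially the same as the paper's: both arguments reduce to the linear independence of $\{\igate,\xgate,\zgate,\xgate\zgate\}$ after using circuit equivalence to equate the two Pauli-basis expansions; the paper phrases it via action on states, you via operator identities, but the content is identical. One small slip: when you substitute $C_{q_1}=e^{i\theta}C_{q_2}$ into the $j=1$ identity, the phase appears on \emph{both} sides and cancels, so you actually get $\sum_k \beta_k^{(1)}P_k=\sum_k \beta_k^{(2)}P_k$ exactly (not up to $e^{i\theta}$), which makes your subsequent phase-equivariance discussion unnecessary.
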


\begin{proof}
Suppose $F_{C_{q_1}}(a,b)=(\beta_1,  \beta_2 , \beta_3 , \beta_4,)$ and $F_{C_{q_2}}(a,b)=(\gamma_1, \gamma_2, \gamma_3, \gamma_4)$ are the corresponding update functions for $a,b\in\{0,1\}.$ Since $C_{q_1}$ and $C_{q_2}$ are equivalent circuits, for every quantum state $\ket{\psi}$ and any $a,b \in\{0,1\},$

$$C_{q_1} \xgate^b \zgate^a(\ket{\psi})=C_{q_2} \xgate^b \zgate^a(\ket{\psi})$$
$$\Leftrightarrow (\beta_1 \igate +  \beta_2 \xgate + \beta_3 \zgate + \beta_4\xgate \zgate)C_{q_1}(\ket{\psi})= (\gamma_1 \igate + \gamma_2 \xgate + \gamma_3 \zgate + \gamma_4\xgate \zgate)C_{q_2}(\ket{\psi})$$
$$\Leftrightarrow (\beta_1 \igate +  \beta_2 \xgate + \beta_3 \zgate + \beta_4\xgate \zgate)C_{q_1}(\ket{\psi})= (\gamma_1 \igate + \gamma_2 \xgate + \gamma_3 \zgate + \gamma_4\xgate \zgate)C_{q_1}(\ket{\psi})$$
$$\Leftrightarrow ((\beta_1- \gamma_1)\igate +  (\beta_2 -\gamma_2) \xgate + (\beta_3 - \gamma_3)\zgate + (\beta_4 -\gamma_4)\xgate \zgate)C_{q_1}(\ket{\psi})= {\bf 0}.$$
$$\Rightarrow (\beta_1- \gamma_1)\igate +  (\beta_2 -\gamma_2) \xgate + (\beta_3 -\gamma_3)\zgate + (\beta_4 -\gamma_4)\xgate \zgate= {\bf 0}.$$
$$\Rightarrow \beta_1= \gamma_1,\,  \beta_2 =\gamma_2,\, \beta_3 =\gamma_3,\, \beta_4 =\gamma_4.$$

Therefore, $F_{C_{q_1}}$ and $F_{C_{q_2}}$ are equivalent functions.
\end{proof}

We note that, on top of being equal, the circuits that compute the update functions $F_{C_{q_1}},$  $F_{C_{q_2}}$ can be assumed to be of the same size. This follows by an argument very similar to the one in \Cref{re:iO:sizes}.\\

\begin{lemma}
\label{lem:1-qubit-synthesis}
Based on the classical $i\mathcal{O}$ that computes the coefficients in \Cref{eq:pgate8}, it is possible to build a quantum circuit that performs the correction efficiently.
\end{lemma}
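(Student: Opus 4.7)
The plan is, given the classical output $(\beta_1,\beta_2,\beta_3,\beta_4)$ of the $i\mathcal{O}$, to first reconstruct the correction operator $U_{\rm corr}:=\beta_1\igate+\beta_2\xgate+\beta_3\zgate+\beta_4\xgate\zgate$ as an explicit $2\times 2$ matrix, and then to feed this matrix into an efficient exact-synthesis subroutine for single-qubit Clifford+$\tgate$ unitaries.

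The first step is to note that $U_{\rm corr}$ is in fact unitary, not merely a linear combination of Paulis. From \Cref{eq:pgate8}, the identity $U_{\rm corr}\,C_q\ket{\psi}=C_q\xgate^b\zgate^a\ket{\psi}$ holds for every $\ket{\psi}$, and so as operators $U_{\rm corr}=C_q\xgate^b\zgate^a C_q^\dagger$, which is a product of three unitaries. Using the explicit forms of $\igate,\xgate,\zgate,\xgate\zgate$ one immediately assembles
$$U_{\rm corr}=\begin{pmatrix}\beta_1+\beta_3 & \beta_2-\beta_4\\ \beta_2+\beta_4 & \beta_1-\beta_3\end{pmatrix},$$
which is a polynomial-time computation in the bit-length of the $\beta_i$'s.

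Second, I would invoke exact Clifford+$\tgate$ circuit synthesis on this matrix. Since each $\beta_i$ arises as a sum of products of elements of $\{0,1,(1+i)/2,(1-i)/2\}$, every $\beta_i$ lies in the ring $\mathbb{Z}[1/2,i]\subseteq\mathbb{Z}[1/\sqrt{2},i]$, and consequently so do the four entries of $U_{\rm corr}$. Single-qubit unitaries with entries in $\mathbb{Z}[1/\sqrt{2},i]$ are precisely those admitting an exact Clifford+$\tgate$ realisation, and such a realisation can be produced in time polynomial in the bit-length of the entries by a standard synthesis procedure (the Clifford+$\tgate$ instance of the general synthesis paradigm mentioned in the excerpt immediately preceding this subsection, extending the Clifford-only routine of~\cite{NWD14}). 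Combining this with \Cref{coeff:size}, which bounds the bit-length of each $\beta_i$ by a polynomial in the $\tgate$-count of~$C_q$, yields a correction circuit of polynomial size.

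The main obstacle is precisely that $U_{\rm corr}$ is typically \emph{not} a Clifford once $C_q$ contains $\tgate$-gates, so \cite{NWD14} cannot be invoked directly. Tracking that the coefficients live in $\mathbb{Z}[1/2,i]$ is what certifies that $U_{\rm corr}$ lies in the class of unitaries admitting efficient exact Clifford+$\tgate$ synthesis, and \Cref{coeff:size} is what ensures the resulting circuit is polynomial in $|C_q|$ rather than blowing up in an uncontrolled way with the $\tgate$-count.
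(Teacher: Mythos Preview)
Your argument is correct but takes a genuinely different route from the paper's.  The paper's proof is a two-sentence appeal: it asserts that the correction $U_{\rm corr}=\beta_1\igate+\beta_2\xgate+\beta_3\zgate+\beta_4\xgate\zgate$ is a single-qubit \emph{Clifford} and then invokes the Clifford synthesis algorithm of~\cite{NWD14} directly.  You instead observe that $U_{\rm corr}=C_q\xgate^b\zgate^a C_q^\dagger$ need not be Clifford once $C_q$ contains $\tgate$-gates, track the matrix entries into the ring $\mathbb{Z}[1/2,i]\subset\mathbb{Z}[1/\sqrt{2},i]$, and appeal to exact single-qubit Clifford+$\tgate$ synthesis together with \Cref{coeff:size} to bound the output size.

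What your route buys is that it actually covers the general single-qubit case.  The paper's reliance on the Clifford hierarchy (stated just before \Cref{sec:1-qubit}) only guarantees a Clifford correction when $C_q$ itself lies in the third level $\mathcal{C}_3$; this fails already for $C_q=\tgate\hgate\tgate$, where one checks that $C_q\xgate C_q^\dagger=\tfrac{1}{\sqrt{2}}\bigl(\begin{smallmatrix}1 & e^{i\pi/4}\\ e^{-i\pi/4} & -1\end{smallmatrix}\bigr)$ conjugates $\zgate$ to a non-Pauli and hence is not Clifford.  Your argument via the ring of entries and exact Clifford+$\tgate$ synthesis closes this gap, at the cost of invoking a slightly stronger (but still standard and efficient) synthesis primitive than the Clifford-only~\cite{NWD14} the paper cites.
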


\begin{proof}
Given a $2 \times 2$ unitary matrix that represents a Clifford operation as in \Cref{eq:pgate8}, it is simple to efficiently derive the Clifford circuit that implements the unitary. This is a special case of the general efficient synthesis for Clifford circuits as presented in~\cite{NWD14}. 
\end{proof}

\subsection{$qi\mathcal{O}$ via Gate Teleportation for all Quantum Circuits}
\label{sec:n-qubit-circuits}

In this section, we construct a $qi\mathcal{O}$ for all quantum circuits. The construction is efficient whenever the number of $\tgate$-gates is at most logarithmic in the circuit size (see \Cref{QiO:qcircuit-teleportation}). The reason for this limitation is that the update function blows up once the number of $\tgate$-gates is greater than logarithmic in the circuit size. The construction is very similar to the gate teleportation for Clifford circuits (\Cref{sec:Clifford-iO-teleportaion}) and assumes the existence of a quantum-secure $i\mathcal{O}$ for classical circuits.

\begin{algorithm}[]
   \caption{$qi\mathcal{O}$ via Gate Teleportation for Quantum Circuits}
   \label{QiO:qcircuit-teleportation}
  \begin{itemize}
  \item Input: A $n$-qubit quantum Circuit $C_q$ with $\tgate$-count $\in O(\log(|C_q|)).$
  \begin{enumerate}
 \item Prepare a tensor product of $n$  Bell states: $\ket{\beta^{ 2n}}=\ket{\beta_{00}}\otimes \cdots \otimes \ket{\beta_{00}}.$
  \item Apply the circuit $C_q$ on the right-most $n$ qubits to obtain a system $\ket{\phi}$:
  										 $$\ket{\phi}=(\igate_n\otimes C_q) \ket{\beta^{2n}}.$$
  \item Set $\hat{C}\leftarrow i\mathcal{O}(C).$ Where $C$ is a circuit that computes the update function $F_{C_q}$ as in \Cref{sec:sizeofupdate}. Note  the size of $C$ is at most a  polynomial in $|C_q|$ (\Cref{lemma:nqubit:cost}).	
    \item  Description of the circuit  $C_q^\prime:$
    \begin{enumerate}
     \item Perform a general Bell measurement on the leftmost $2n$-qubits on the system $\ket{\phi}\otimes \ket{\psi}$, where $\ket{\phi}$ is an auxiliary state and $\ket{\psi}$ is an input state. Obtain classical bits $(a_1,b_1\ldots,a_n,b_n)$ and the state
 $$C_q(\xgate^{\otimes_{i=1}^{n} b_{i}} \cdot \zgate^{\otimes_{i=1}^{n} a_{i}})\ket{\psi}.$$
     \item Compute the correction using the obfuscated circuit
  $$((\beta_1, {\bf s}_1),\ldots, (\beta_n, {\bf s}_k))=\hat{C}(a_1,b_1\ldots,a_n,b_n).$$
     \item Using the above, the correction unitary is $$U_{F_{C_q}}=\sum_{i=1}^{4^k} \beta_i \xgate^{b_{i_1}} \zgate^{a_{i_1}}\otimes \cdots \otimes \xgate^{b{i_n}} \zgate^{a_{i_n}}.$$
     Compute a quantum circuit that applies $U_{F_{C_q}}$, using the circuit synthesis method of \cite{NWD14}.
     \item Apply the quantum circuit for $U_{F_{C_q}}$ to the system $C_q(\xgate^{\otimes_{i=1}^{n} b_{i}} \cdot \zgate^{\otimes_{i=1}^{n} a_{i}})\ket{\psi}$ to obtain the state $C_q(\ket{\psi}).$
     \end{enumerate}
  \end{enumerate}
  \end{itemize}
\end{algorithm}

We are now ready to present our main theorem (\Cref{theorem:main}). For ease of presentation, the proof relies on three auxiliary lemmas that are presented in the following section: \Cref{lemma:nqubit} (which shows that equivalent circuits have equivalent update functions), \Cref{lem1:mainTh} (which bounds the number of terms of the update function), and \Cref{lemma:nqubit:cost} (which shows that update functions can be computed by a polynomial-size circuits).

\begin{theorem} (Main Theorem)\label{theorem:main}
If $i\mathcal{O}$ is a perfect/statistical/computational quantum-secure indistinguishability obfuscation for classical circuits, then \Cref{QiO:qcircuit-teleportation} is a perfect/statistical/computational quantum indistinguishability obfuscator for any quantum circuit~$C_q$ with $\tgate$-count $\in O(\log|C_q|).$
\end{theorem}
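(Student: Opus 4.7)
The plan is to verify the three conditions of \Cref{def:QiO} for \Cref{QiO:qcircuit-teleportation}, in the same style as the Clifford case (\Cref{th:qio:cliff}), treating the three auxiliary lemmas \Cref{lemma:nqubit}, \Cref{lem1:mainTh}, and \Cref{lemma:nqubit:cost} as black boxes. For functionality, after the Bell measurement on the first $2n$ qubits of $\ket{\phi}\otimes\ket{\psi}$ the remaining system is $C_q(\xgate^{\otimes b_i}\zgate^{\otimes a_i})\ket{\psi}$ by the standard gate-teleportation identity. Then, using the $n$-qubit generalization of \Cref{eq:pgate8} (obtained by pushing every Pauli past the gates of $C_q$ one at a time using \Cref{eq:pgate3} for each $\tgate$ and the Clifford update rules of \Cref{update function} otherwise), there is a correction $U_{F_{C_q}}$ of the form $\sum_i \beta_i P_i$ with $P_i$ a tensor product of Paulis, such that $U_{F_{C_q}}\, C_q(\xgate^{\otimes b_i}\zgate^{\otimes a_i}) = C_q$ up to global phase. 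By correctness of the classical $i\mathcal{O},$ $\hat{C}$ produces the exact coefficient list, and the synthesis procedure of~\cite{NWD14} compiles this matrix description into a quantum circuit that implements $U_{F_{C_q}}$; hence the diamond-norm error is zero.

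For polynomial slowdown, $\ket{\phi}$ uses $2n$ qubits, and \Cref{lem1:mainTh} combined with \Cref{lemma:nqubit:cost} bounds the size of the classical update circuit $C$ by a polynomial in $|C_q|$, precisely because the $\tgate$-count is $O(\log|C_q|)$ so the number of terms $4^k$ in the correction sum is polynomial in $|C_q|$. The polynomial slowdown of the classical $i\mathcal{O}$ then controls $|\hat{C}|$, the Bell-measurement subcircuit is of size $O(n),$ and the Clifford synthesis of~\cite{NWD14} is efficient in the matrix dimension, which is polynomial. Summing yields $|C_q^\prime|$ polynomial in $|C_q|$, and similarly for $\ell$ and $m$.

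For indistinguishability, fix equivalent $C_{q_1},C_{q_2}\in C_{q^n}$ with $|C_{q_1}|=|C_{q_2}|$. Because $C_{q_1}\equiv C_{q_2},$ the advice states are literally equal: $\ket{\phi_1}=(\igate_n\otimes C_{q_1})\ket{\beta^{2n}}=(\igate_n\otimes C_{q_2})\ket{\beta^{2n}}=\ket{\phi_2}.$ By \Cref{lemma:nqubit}, the two update functions coincide as functions on $\{0,1\}^{2n}$, and by the padding argument of \Cref{re:iO:sizes} we may assume the classical circuits $C_1,C_2$ that compute them have the same size. The hypothesis on the classical $i\mathcal{O}$ then gives that $i\mathcal{O}(C_1)$ and $i\mathcal{O}(C_2)$ are perfectly/statistically/computationally indistinguishable; everything else in $C_q^\prime$ (state preparation, Bell measurement, synthesis wrapper) depends only on $n$ and $|C_q|,$ which are identical across the two inputs.

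The main obstacle will be the computational case, where indistinguishability of $\hat{C}$ must be lifted to indistinguishability of the pair $(\ket{\phi},C_q^\prime)$ against any polynomial-time quantum distinguisher $\mathcal{D}_q$. The reduction I would write receives $\hat{C}$ from the classical $i\mathcal{O}$ challenger, prepares the fixed state $\ket{\phi}$ itself (identical in both worlds and efficiently constructible), assembles $C_q^\prime$ by hard-coding $\hat{C}$ into the measurement-and-correct template of \Cref{QiO:qcircuit-teleportation} (the call to $\hat{C}$ is on the classical bits produced by the Bell measurement, so this is a classically-controlled circuit and the reduction is QPT), and forwards $(\ket{\phi},C_q^\prime)$ to $\mathcal{D}_q$. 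Any non-negligible advantage of $\mathcal{D}_q$ would then translate into a non-negligible distinguishing advantage against the classical $i\mathcal{O},$ contradicting the assumed quantum security and completing the proof.
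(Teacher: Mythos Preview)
Your proposal is correct and follows essentially the same structure as the paper's proof: functionality via gate-teleportation plus exact Clifford synthesis, polynomial slowdown via \Cref{lem1:mainTh} and \Cref{lemma:nqubit:cost} together with the $O(n)$ Bell-measurement and efficient synthesis, and indistinguishability via equality of the advice states, \Cref{lemma:nqubit}, and the padding remark. Your final paragraph spelling out the explicit QPT reduction in the computational case is a welcome addition that the paper leaves implicit; one minor wording issue is that Clifford synthesis in~\cite{NWD14} is polynomial in the number of qubits~$n$ (equivalently in the tableau/coefficient representation), not in the literal $2^n\times 2^n$ matrix dimension.
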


\begin{proof} We have to show that \Cref{QiO:qcircuit-teleportation}  satisfies \Cref{def:QiO}. Throughout  the proof, we assume that the quantum circuits have a logarithmic $\tgate$-count in the circuit size.

\begin{enumerate}
\item  {\tt Functionality:} The proof of functionality follows from the principle of gate teleportation  and is very similar to the proof in \Cref{th:qio:cliff}. Since the Clifford circuit synthesis has perfect correctness\cite{NWD14}, we have
$||C_q^\prime(\ket{\phi}, \cdot)-C_{q}(\cdot) ||_\diamond=0\leq {\tt negl}(n)$ for any negligible function ${\tt negl}(n).$

\item {\tt Polynomial Slowdown:} Note that $\ket{\phi}$ is a $2n$-qubit state and $C_q^\prime$ is of type $(m,n),$ where $m=3n,$ therefore both $\ket{\phi}$ and $m$ have size  in $O(|C_q|).$ The size of $C_q^\prime$ is equal to the size of $\hat{C}$ plus the size of the circuit that performs the general Bell measurement ($GBM$) and the size of the circuit that computes the circuit for $U_{F_{C_q}}$.
														 Since the size of $|GBM|$ is in $O(n)$, the size of $|\hat{C}|$ is  polynomial in $|C_q|$ (\Cref{lem1:mainTh}, \Cref{lemma:nqubit:cost}, \Cref{lem:beta:size} and the definition of $i\mathcal{O}$). Moreover, efficient Clifford synthesis implies that the size of the circuit for $U_{F_{C_q}}$ is polynomial~\cite{NWD14}.
 Therefore, there exists a polynomial $p(\cdot)$ such that
\begin{itemize}
\item $|\ket{\phi}|\leq p(|C_q|)$
\item $m\leq p(|C_q|)$
\item  $|C_q^\prime|\leq p(|C_q|).$
\end{itemize}

\item {\tt Perfect/Statistical/Computational Indistinguishability:} Let $C_{q_1}$ and $C_{q_2}$ be two $n$-qubit circuits of the same size. Let $\left(\ket{\phi_1},C_{q_1}^\prime \right)$ and $\left(\ket{\phi_2},C_{q_2}^\prime \right)$ be the outputs of \Cref{QiO:Clifford-teleportation} on inputs $C_{q_1}$ and $C_{q_2}$ respectively.
Since  $C_{q_1}(\ket{\tau})=C_{q_2}(\ket{\tau})$ for every quantum state $\ket{\tau}$ we have, \looseness=-1

\begin{equation}
\ket{\phi_1}=(I\otimes C_{q_1}) \ket{\beta^{2n}}=(I\otimes C_{q_2}) \ket{\beta^{2n}}=\ket{\phi_2},
\end{equation}
The update functions for any two equivalent quantum circuits are equivalent (\Cref{lemma:nqubit}). If the classical $i\mathcal{O}$ that obfuscates the circuits for the update functions  is perfectly/statistically/computationally indistinguishable, then states $C_{q_1}^\prime$ and $C_{q_2}^\prime$ are perfectly/statistically/computationally indistinguishable.\footnote{Note that circuits that compute update functions (for equivalent quantum circuits) may have different sizes. However, that can be managed as discussed in \Cref{re:iO:sizes}.} Therefore, \Cref{QiO:qcircuit-teleportation} is a perfectly/statistically/ computationally indistinguishable quantum obfuscator for the quantum circuits. \qedhere
\end{enumerate}
\end{proof}

\subsection{Equivalent Update Functions}
\label{sec:Update-equivalent}

Here, we provide a generalization of  \Cref{lem:clifford-functions}, applicable to the case of general circuits.

\begin{lemma}
\label{lemma:nqubit}
Let $C_{q_1}$ and $C_{q_2}$ be two equivalent $n$-qubit circuits. Then their corresponding update functions are also equivalent.
\end{lemma}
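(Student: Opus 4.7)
The plan is to mimic the argument of \Cref{lem:clifford-functions}, adapted to the fact that, once $\tgate$-gates appear, the correction operator produced by the update function is no longer a single Pauli but a $\mathbb{C}$-linear combination of Paulis, as established by \Cref{eq:pgate3} and its $n$-qubit generalization in \Cref{sec:sizeofupdate}. First I would fix notation: for $r\in\{1,2\}$, write $F_{C_{q_r}}(\mathbf{v})$ (where $\mathbf{v}=a_1 b_1 \cdots a_n b_n$) as a list of pairs $\{(\beta_i^{(r)},\mathbf{s}_i^{(r)})\}_i$, let $U_r(\mathbf{v}) := \sum_i \beta_i^{(r)} P_{\mathbf{s}_i^{(r)}}$ denote the associated correction operator (where $P_{\mathbf{s}}$ is the $n$-qubit Pauli encoded by $\mathbf{s}\in\{0,1\}^{2n}$), and recall that by construction $C_{q_r}P_{\mathbf{v}}\ket{\psi}=U_r(\mathbf{v})\,C_{q_r}(\ket{\psi})$ for every $\ket{\psi}$.

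The core step is then a short argument by contradiction. Assume that for some $\mathbf{v}$ the two correction operators satisfy $U_1(\mathbf{v})\neq U_2(\mathbf{v})$. The equivalence hypothesis $C_{q_1}\equiv C_{q_2}$ (up to global phase) gives
$$U_1(\mathbf{v})\,C_{q_1}(\ket{\psi}) \;=\; C_{q_1}P_{\mathbf{v}}\ket{\psi} \;=\; C_{q_2}P_{\mathbf{v}}\ket{\psi} \;=\; U_2(\mathbf{v})\,C_{q_2}(\ket{\psi}) \;=\; U_2(\mathbf{v})\,C_{q_1}(\ket{\psi})$$
for every $\ket{\psi}$, and since $C_{q_1}$ is unitary (so $\{C_{q_1}\ket{\psi}\}$ spans $\mathcal{H}_n$), this forces $U_1(\mathbf{v})=U_2(\mathbf{v})$ as operators, a contradiction. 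This parallels both \Cref{lem:clifford-functions} and \Cref{lemma:1qubit}, and I would model the write-up on the latter.

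The main obstacle --- more a definitional subtlety than an analytic one --- is that $F_{C_{q_r}}$ outputs a formal list of (coefficient,\,Pauli-index) pairs rather than an operator, and two syntactically distinct lists (via reordering of terms, or merging repeated Pauli indices) can encode the same correction operator. I would therefore interpret ``$F_{C_{q_1}}$ and $F_{C_{q_2}}$ are equivalent'' as meaning that they produce the same correction operator on every input, which is the notion actually relevant to the obfuscation in \Cref{QiO:qcircuit-teleportation}. The passage from operator equality back to equality of coefficients in the canonical Pauli expansion uses linear independence of the $4^n$ Paulis as a basis of ${\bf M}_{2^n}(\mathbb{C})$, exactly as in the 1-qubit argument of \Cref{lemma:1qubit}; once that convention is fixed, no new algebra is required beyond what already appears in \Cref{lem:clifford-functions}.
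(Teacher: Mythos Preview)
Your proposal is correct and follows essentially the same route as the paper's own proof: both use the circuit equivalence to equate $U_1(\mathbf{v})\,C_{q_1}\ket{\psi}$ with $U_2(\mathbf{v})\,C_{q_1}\ket{\psi}$ for all $\ket{\psi}$, deduce $U_1(\mathbf{v})=U_2(\mathbf{v})$ as operators, and then identify the update functions via the one-to-one correspondence with their correction unitaries. Your write-up is in fact slightly more explicit than the paper's in justifying the passage from ``equal on all $C_{q_1}\ket{\psi}$'' to ``equal as operators'' (via surjectivity of the unitary $C_{q_1}$) and in flagging the list-versus-operator convention for what ``equivalent update functions'' means.
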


\begin{proof}
Suppose $C_{q_1}$ and $C_{q_2}$ are two equivalent $n$-qubit quantum circuits, then for any quantum state $\ket{\psi}$ and for any binary string ${\bf r}\in\{0,1\}^{2n}.$
\begin{equation}
\label{equ1:nqubit}
C_{q_1} (\xgate^{a_{i}}\zgate^{b_{i}})^{\otimes_{i=1}^{n}}\ket{\psi} =C_{q_2} (\xgate^{a_{i}}\zgate^{b_{i}})^{\otimes_{i=1}^{n}}\ket{\psi}.
\end{equation}
Then the corresponding update functions are (see \Cref{nqubit:eq0})

\begin{center}
$F_{C_{q_1}}({\bf r})=((\beta_1, {\bf s}_1),\ldots, (\beta_n, {\bf s}_k))$ \\
$F_{C_{q_1}}({\bf r})=((\beta_1^\prime, {\bf s}_1^\prime),\ldots, (\beta_{\ell}^\prime, {\bf s}_{\ell}^\prime))$
\end{center}
where ${\bf s}_i=a_{i_1}b_{i_1}, \ldots, a_{i_n}b_{i_n} \in\{0,1\}^{2n},$  and ${\bf s}_j^\prime=a_{j_1}b_{j_1}, \ldots, a_{j_n}b_{j_n} \in\{0,1\}^{2n}.$
Without loss of generality, we can assume that $\beta_i\neq 0, \beta_j^\prime\neq 0$ for $i\in[k], j\in[\ell].$ Using the update functions, we can rewrite \Cref{equ1:nqubit} as

\begin{equation}\label{equ2:nqubit}
\left(\sum_{i=1}^{4^k} \beta_i \xgate^{b_{i_1}} \zgate^{a_{i_1}}\otimes \cdots \otimes \xgate^{b{i_n}} \zgate^{a_{i_n}}\right)C_{q_1}\ket{\psi} =\left(\sum_{j=1}^{4^{\ell}} \beta_j^\prime \xgate^{b_{j_1}^\prime} \zgate^{a_{j_1}^\prime}\otimes \cdots \otimes \xgate^{b_{j_n}^\prime} \right)C_{q_2}\ket{\psi}.
\end{equation}
Since $C_{q_1}$ and $C_{q_2}$ are equivalent, we can replace $C_{q_2}$ with $C_{q_1}$ in  \Cref{equ2:nqubit}

\begin{equation}\label{equ3:nqubit}
\left(\sum_{i=1}^{4^k} \beta_i \xgate^{b_{i_1}} \zgate^{a_{i_1}}\otimes \cdots \otimes \xgate^{b{i_n}} \zgate^{a_{i_n}}\right)C_{q_1}\ket{\psi} =\left(\sum_{j=1}^{4^{\ell}} \beta_j^\prime \xgate^{b_{j_1}^\prime} \zgate^{a_{j_1}^\prime}\otimes \cdots \otimes \xgate^{b_{j_n}^\prime} \right)C_{q_1}\ket{\psi}
\end{equation}
Then \Cref{equ3:nqubit} can only hold if

\begin{equation}\label{equ4:nqubit}
\left(\sum_{i=1}^{4^k} \beta_i \xgate^{b_{i_1}} \zgate^{a_{i_1}}\otimes \cdots \otimes \xgate^{b{i_n}} \zgate^{a_{i_n}}\right)=\left(\sum_{j=1}^{4^{\ell}} \beta_j^\prime\xgate^{b_{j_1}^\prime} \zgate^{a_{j_1}^\prime}\otimes \cdots \otimes \xgate^{b_{j_n}^\prime} \right)
\end{equation}

Note the update functions $F_{C_{q_1}}$ and $F_{C_{q_2}}$ are in one-to-one mapping with the unitaries on the left- and right-hand side of \Cref{equ4:nqubit} respectively. Since, the unitaries are equivalent, the corresponding update functions are also equivalent.
\end{proof}

\subsection{Complexity of Computing the Update Function}
\label{sec:sizeofupdate}
Let $C_q$ be an $n$-qubit circuit  consisting  of a sequence of gates $g_1\ldots,g_{|C_q|}.$  The update function $F_{C_q}$ is computed by composing update functions for each gate in $C_q.$
									      $$ F_{C_q}= f_{g_{{|C_q|}}}\circ  \cdots \circ f_{g_2} \circ f_{g_1}.$$
Therefore the update function for any $n$-qubit quantum circuit $C_q$ with $k$  $\tgate$-gates can be defined as the following map.
\begin{equation}\label{nqubit:eq0}
\begin{aligned}
&F_{C_q}: \{0,1\}^{2n}\longrightarrow  (\mathbb{C} \times \{0,1\}^{2n})^{min(k,n)},\\
 &\vspace{0.25cm}(a_1,b_1,\ldots, a_n,b_n) \mapsto \left((\beta_1, {\bf s}_1),\dots, (\beta_{4^k}, {\bf s}_{4^k})\right).
 \end{aligned}
 \end{equation}
which corresponds to the following correction unitary
\begin{equation}
\label{exp:nqubit-correction}
\sum_{i=1}^{4^k} \beta_i \xgate^{b_{i_1}} \zgate^{a_{i_1}}\otimes \cdots \otimes \xgate^{b{i_n}} \zgate^{a_{i_n}}.
\end{equation}
where $\beta_i\in\mathbb{C}$ and ${\bf s}_i=a_{i_1}b_{i_1}, \ldots, a_{i_n}b_{i_n} \in\{0,1\}^{2n},$ $i\in[4^k].$ Therefore, in order to satisfy the efficiency requirement (polynomial-slowdown), we must have $k\in O(\log(|C_q|))$.
Note that the range of the update function can increase exponentially in the number of $\tgate$-gates as long as $k\leq n$ (\Cref{nqubit:eq0}). This is because there are at most $2^{2n}$ binary strings of length $2n,$ therefore for any $n$-qubit circuit, the correction unitary \Cref{exp:nqubit-correction} can be written as a summation of at most $2^{2n}$ terms. We will first prove that as long as the $\tgate$-count in $O(\log(|C_q|),$ the number of terms in $F_{C_q}$ has at most $O(|C_q|)$ terms.

\begin{lemma}\label{lem1:mainTh}
If $C_q$ is an $n$-qubit quantum circuit with $\tgate$-count in $O(\log(|C_q|),$ then the update function $F_{C_q}$ (\Cref{nqubit:eq0}) has at most $O(|C_q|)$ terms.
\end{lemma}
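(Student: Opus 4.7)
The plan is to proceed by induction on the gates of $C_q$ in topological order, tracking the number of distinct Pauli terms that appear in the correction unitary as we commute the initial Pauli one-time pad past each successive gate. The key invariant I would maintain is: after processing $j$ gates, the correction operator has the form $\sum_i \beta_i P_i$, where each $P_i$ is an $n$-qubit Pauli (tensor of $\xgate^{b}\zgate^{a}$), and I would track the number of terms in this sum as a function of the number of $\tgate$-gates encountered so far.

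The induction step splits into two cases. If $g_{j+1}$ is a Clifford gate, then for every Pauli $P_i$ appearing in the correction, we have $g_{j+1} P_i = P_i' g_{j+1}$ for some Pauli $P_i'$ (this is the defining property of the Clifford group, made explicit in the update-function table in \Cref{update function}). Hence Clifford gates do not increase the number of terms at all. If $g_{j+1}$ is a $\tgate$-gate acting on some qubit $\ell$, then it meets the Pauli factor on qubit $\ell$, say $\xgate^{b_\ell}\zgate^{a_\ell}$, and by \Cref{eq:pgate2} this single factor expands into a sum of at most two Pauli factors (with coefficients in $\{1, (1\pm i)/2\}$); all other tensor factors are unaffected. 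Thus each $\tgate$-gate at most doubles the number of terms.

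Starting from a single term (the initial correction $\xgate^b \zgate^a$ coming from the Bell measurement), after processing all $|C_q|$ gates the correction unitary has at most $2^k$ Pauli terms, where $k$ is the $\tgate$-count. Since by hypothesis $k \in O(\log|C_q|)$, we obtain $2^k \in |C_q|^{O(1)}$, which for $k \le \log|C_q|$ gives exactly $O(|C_q|)$ terms as claimed; in general it gives a polynomial bound absorbed by the polynomial-slowdown requirement.

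The main obstacle I anticipate is purely bookkeeping: verifying that the invariant is preserved when a $\tgate$ acts on a qubit whose Pauli factor appears inside many summands, each with its own phase coefficient $\beta_i$. The subtlety is that after expansion, two previously distinct summands could become proportional to the same Pauli (and thus be combined), which only decreases the count; nothing can increase it beyond the factor of two. Once this is formalized, the bound $2^k \le |C_q|^{O(1)}$ is immediate, and combining with the additional remark that at most $2^{2n}$ distinct $n$-qubit Paulis exist (making $\min(2^k, 4^n)$ a hard ceiling that can only tighten the bound) completes the argument.
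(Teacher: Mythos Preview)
Your proposal is correct and follows essentially the same approach as the paper: both argue gate-by-gate that Clifford gates preserve the number of Pauli terms while each $\tgate$-gate multiplies it by a constant, yielding a bound exponential in the $\tgate$-count and hence polynomial in $|C_q|$. Your version is in fact slightly tighter, since you invoke the two-term expansion of \Cref{eq:pgate2} to get a factor of $2$ per $\tgate$-gate, whereas the paper uses the coarser four-term form of \Cref{eq:pgate3} and obtains $4^k$; you also correctly flag that $2^{O(\log|C_q|)}$ is only $|C_q|^{O(1)}$ in general, a subtlety the paper glosses over.
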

\begin{proof}
Let $C_q$ be an $n$-qubit circuit with $\tgate$-count in $O(\log(|C_q|).$ Suppose we want to evaluate $C_q$ on some $n$-qubit state $\ket{\psi},$ then after the step 4a of \Cref{QiO:qcircuit-teleportation}, we will obtain a state
\begin{equation}\label{nqubit:eq1}
C_q (\xgate^{a_{1}}\zgate^{b_{1}}\otimes \xgate^{a_{2}}\zgate^{b_{2}}\otimes \cdots \otimes \xgate^{a_{n}}\zgate^{b_{n}})\ket{\psi}.
\end{equation}
In order to recover $C_q(\ket{\psi})$ from the above expression, we multiply the correction unitary $U_{F_{C_q}}$ to the left hand side of expression \Cref{nqubit:eq1}. To compute $U_{F_{C_q}}$ we first compute the update function $F_{C_q}$ on the input $a_1b_1,\ldots,a_nb_n$

$$F_{C_q}(a_1b_1,\ldots,a_nb_n)=\left((\beta_1, {\bf s}_1),\dots, (\beta_{4^k}, {\bf s}_{4^k})\right)$$
where $\beta_i \in \mathbb{C},$ ${\bf s}_i\in\{0,1\}^{2n},$ $k\in \mathbb{N}$
$$U_{F_{C_q}}=\sum_{i=1}^{4^k} \beta_i \xgate^{b_{i_1}} \zgate^{a_{i_1}}\otimes \cdots \otimes \xgate^{b{i_n}} \zgate^{a_{i_n}}.$$

We will show that if the $\tgate$-count is in $O(\log(|C_q|),$ then $k\in O(|C_q|)$ (number of terms). Recall that
\begin{equation}\label{nqubit:eq2}
\tgate \xgate^b \zgate^a= (\alpha_1 \igate +  \alpha_2 \xgate + \alpha_3 \zgate + \alpha_4 \xgate\zgate)\tgate
\end{equation}

\begin{itemize}
\item {\bf Case 0}: Suppose $C_q$ has no $\tgate$-gates, then $C_q$ is a Clifford and there is only one term in $F_{C_q}.$ Therefore $k=0.$
\item {\bf Case 1}: Suppose $C_q$ has one $\tgate$-gate (acting on some $\ell$-th wire).
\begin{equation}\label{nqubit:eq3}
\begin{aligned}
&C_q (\xgate^{a_{1}}\zgate^{b_{1}}\otimes \cdots \otimes \xgate^{a_{n}}\zgate^{b_{n}})\\
&=(\xgate^{\otimes_{i=1}^{\ell-1} b_{i}^\prime} \zgate^{\otimes_{i=1}^{\ell-1}a_{i}^\prime})\otimes \left(\sum_{j=1}^{4} \beta_j \xgate^{b_j^\prime}\zgate^{a_j^\prime}\right) \otimes (\xgate^{\otimes_{i=\ell+1}^{n} b_{i}^\prime} \zgate^{\otimes_{i=\ell+1}^{n}a_{i}^\prime}) C_q\\
&=\sum_{i=1}^{4} \beta_i \xgate^{b_{i_1}^\prime} \zgate^{a_{i_1}^\prime}\otimes \cdots \otimes \xgate^{b_{i_n}^\prime} \zgate^{a_{i_n}^\prime}.
\end{aligned}
\end{equation}
Therefore $k\leq 4.$ It is important to realize that $4$ is the maximum number of terms a circuit with $\tgate$ can have. No Clifford gate including $\cnot$ can increase the number of terms beyond 4. This is because only a $\tgate$ gate can contribute a 4-term expression and then we expand all the terms in the correction unitary to maximum (\Cref{nqubit:eq3}). Now any Clifford acting on the correction unitary will act linearly on $U_{F_{C_q}}$ and only affect the bits $a_i^\prime$ and~$b_i^\prime$ (\Cref{nqubit:eq3}).
\item {\bf Case 2}: Similarly, if $C_q$ has two $\tgate$ gates, then each will contribute at most one expression of the form $(\beta_1 \igate +  \beta_2 \xgate + \beta_3 \zgate + \beta_4 \xgate\zgate).$  This will give us a correction unitary  $U_{F_{C_q}}=\sum_{i=1}^{4^2} \beta_i \xgate^{b_{i_1}} \zgate^{a_{i_1}}\otimes \cdots \otimes \xgate^{b{i_n}} \zgate^{a_{i_n}},$ therefore $k\leq 16.$  Note that it makes no difference whether $\tgate$ gates are acting on the same wire or on different wires for the worst-case analysis. If they are on the same wire, then the first $\tgate$ will contribute 4 terms and the second $\tgate$ will expand each term into 4 more terms, resulting in 16 terms. If they are acting on different wires, say $i$ and $j$ and suppose $\cnot$s are acting on the  $i$-th to $j$-th wire, then we may have to expand all terms in the unitary to apply $\cnot$s. This again can contribute at most 16 terms.
\item[] {\bf General Case}: Suppose  $C_q$  has $O(\log(|C_q|)$ $\tgate$ gates, then each~$\tgate$-gate will contribute at most a linear combination of 4 terms and in total at most $4^{O(\log(|C_q|)}$ terms. 
 therefore $k\in O(|C_q|).$\qedhere
\end{itemize}
\end{proof}

\begin{lemma}\label{lemma:nqubit:cost}
If $C_q$ be an $n$-qubit quantum circuit with $\tgate$-count $\in O(\log|C_q|),$ then there exists a classical circuit $C$ and a polynomial $p(\cdot)$ such that
\begin{itemize}
\item $C$ computes the update function $F_{C_q},$
\item $|C|\leq p(|C_q|).$
\end{itemize}
\end{lemma}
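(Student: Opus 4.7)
The plan is to build the classical circuit $C$ by simulating, gate-by-gate, the commutation of the incoming Pauli correction through $C_q$, while maintaining a list of pairs $(\beta_j,\mathbf{s}_j)$ representing the current linear combination $\sum_j \beta_j\, P(\mathbf{s}_j)$ of Pauli tensor products. On input $(a_1,b_1,\ldots,a_n,b_n)$, the list is initialized to the single term $(1,\,a_1b_1\ldots a_nb_n)$, and at the end it is precisely the tuple $\bigl((\beta_1,\mathbf{s}_1),\ldots,(\beta_m,\mathbf{s}_m)\bigr)$ defining $F_{C_q}$ in \Cref{nqubit:eq0}.

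The per-gate update rules come directly from earlier material. For each Clifford gate $g_i$ in a topological order, I would apply the classical update functions of \Cref{update function} in parallel to every $\mathbf{s}_j$ in the current list: this permutes / XORs a constant number of bits of each $\mathbf{s}_j$ (on the wires touched by $g_i$) and leaves the $\beta_j$ and the number of terms unchanged, so a single Clifford gate costs $O(n \cdot m)$ wires of classical circuitry, where $m$ is the current list length. For each $\tgate$-gate on wire $\ell$, I would apply \Cref{eq:pgate3}: each term $(\beta_j,\mathbf{s}_j)$ is replaced by at most four terms whose $\mathbf{s}$-bits on wire $\ell$ range over the four Pauli options and whose coefficients are $\beta_j\cdot\alpha$ with $\alpha\in\{0,1,(1\!+\!i)/2,(1\!-\!i)/2\}$ (terms with $\alpha=0$ are dropped). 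This costs at most $O(m)$ new entries plus a single complex multiplication per entry.

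Two things then need to be checked. First, the \emph{coefficient bit-complexity}: every $\beta_j$ in the final list is a product of at most $k$ factors from $\{0,1,(1\!+\!i)/2,(1\!-\!i)/2\}$, one per $\tgate$-gate, so it has the form $(p+iq)/2^k$ with $|p|,|q|\le 2^k$ and can be encoded in $O(k)=O(\log|C_q|)$ bits; each complex multiplication by such an $\alpha$ is implementable by a constant-size classical subcircuit acting on these $O(\log|C_q|)$-bit numerators. Second, the \emph{list length}: by \Cref{lem1:mainTh}, throughout the simulation $m\le 4^k\in O(|C_q|)$, since Cliffords do not enlarge the list and each of the $k\in O(\log|C_q|)$ $\tgate$-gates multiplies it by at most $4$.

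Putting this together, the classical circuit $C$ consists of $|C_q|$ phases, one per gate of $C_q$, each of size at most $\mathrm{poly}(n,k,m)\le \mathrm{poly}(|C_q|)$; its total size is therefore bounded by some polynomial $p(|C_q|)$, as required. The only delicate point is the coefficient bookkeeping, but because the alphabet of multiplicative factors is the fixed four-element set above and there are only $O(\log|C_q|)$ multiplications along any coefficient's history, the numerators $p,q$ stay polynomially bounded and every arithmetic operation we invoke is realized by a fixed-size classical subcircuit on $O(\log|C_q|)$-bit registers; this is precisely the content (or an immediate consequence) of \Cref{lem:beta:size}, which keeps the whole construction within the claimed polynomial budget. \qed
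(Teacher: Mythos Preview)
Your proposal is correct and follows essentially the same gate-by-gate construction as the paper: both build $C$ by composing per-gate subcircuits, observe that Clifford gates act on constantly many bits of each term without increasing the number of terms, and that each $\tgate$ multiplies the number of terms by at most four, so the $O(\log|C_q|)$ bound on the $\tgate$-count keeps the total size polynomial. Your treatment is in fact more explicit than the paper's about the list data structure and the coefficient bit-complexity (correctly invoking \Cref{lem1:mainTh} and \Cref{lem:beta:size}), but the underlying argument is the same.
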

\begin{proof}
Let $C_q$ be a $n$-qubit quantum circuit with $\tgate$-count $\in O(\log|C_q|).$ Recall from \Cref{update function} that the corresponding update functions for the Clifford + $\tgate$ gate set are:
\begin{itemize}
\item $f_\xgate(a_1,b_1)=(a_1,b_1)$
\item $f_\zgate(a_1,b_1)=(a_1,b_1)$
\item $f_\hgate(a_1,b_1)=(b_1,a_1)$
\item $f_ \pgate(a_1,b_1)=(a_1,a_1\oplus b_1)$
\item $f_{\cnot}(a_1,b_1,a_2,b_2)= (a_1\oplus a_2,b_1,a_2, b_1\oplus b_2)$
\item $f_{\tgate}(a,b)=(\alpha_1,\alpha_2,\alpha_3,\alpha_4)$
\end{itemize}

Let $C_\xgate,$ $C_\zgate,$ $C_\hgate,$ $C_\pgate,$ $C_\cnot$ and $C_{\tgate}$ denote the classical circuits (called \emph{subcircuits})  that compute $f_\xgate, f_\zgate, f_\hgate f_ \pgate,  f_{\cnot}$ and $f_{\tgate}$ respectively. Clearly, these subcircuits are of constant size. Recall that all subcircuits for Cliffords map $k$-bit strings to $k$-bit strings ($k\in\{0,1\}$), but $C_{\tgate}$ expands its 2-bit input into a $4\ell$-bit strings (where $\ell =\max\{\left|\frac{1+ i}{2}\right|, \left|\frac{1-i}{2}\right|\}$\footnote{The notation $|a+bi|$  denotes the number of bits to represent the complex number $a+bi.$}. Let $C$ be a circuit that computes~$F_{C_q}$.  Then $C$ can be expressed in terms of these gadgets. To construct $C$, we go gate-by-gate in $C_q$ and employ the corresponding subcircuit. If $C_q$ is a Clifford circuit, then $C$ only consists of Clifford subcircuit (as mentioned earlier they map $k$ bits to $k$ bits $k\in\{2,4\}$  there are $O(|C_q|)$ gadgets in the circuit $C$). Otherwise if $C_q$ has $k$ $\tgate$-gates, then each $C_{\tgate}$ subcircuit will map a 2-bit string to a $4\ell$-bit string, potentially increasing the size of $C$  to $O({(4\ell)}^k),$ but since  $k\in O(\log(|C_q|),$ we have $O({(4\ell)}^k)\in O(|C_q|).$ Therefore, there exists a polynomial $p(\cdot)$ such that $|C|\leq p(n).$
\end{proof}

\section{Quantum Indistinguishability Obfuscation with Respect to a Pseudo-Distance}
\label{sec:quantum:iO:approx:circuits}

\label{sec:approx:circuits}
In this section, we provide a definition for circuits that are approximately equivalent  (with respect to a pseudo-distance) (\Cref{def:aqec}). In \Cref{sec:qiO:approx:circuits}, we present a definition of quantum indistinguishability obfuscation with respect to a pseudo-distance, and in  \Cref{sec:Gottesman-Chuang}, we present a scheme that satisfies this definition,   for circuits close to a fixed level of the Gottesman-Chuang hierarchy.

\subsection{Approximately Equivalent Quantum Circuits}
\begin{definition}{\rm (Approximately Equivalent Quantum Circuits):}
\label{def:aqec}
Let $C_{q_0}$ and~$C_{q_1}$ be two $n$-qubit quantum circuits and {\bf D} be a pseudo-distance. We say $C_{q_0}$ and $C_{q_1}$ are \emph{approximately equivalent} with respect to {\bf D} if there exists a negligible function ${\tt negl}(n)$ such that  $${\bf D}(C_{q_0}, C_{q_1})\leq {\tt negl}(n).$$
\end{definition}

\subsection{Indistinguishability Obfuscation for Approximately Equivalent Quantum Circuits}
\label{sec:qiO:approx:circuits}
In this section, we provide a definition of quantum indistinguishability obfuscation for approximately equivalent circuits, $qi\mathcal{O}_{\bf D}.$ To be consistent with \Cref{def:QiO}, we  require that the obfuscator, on input a quantum circuit $C_q,$  outputs an auxiliary quantum state $\ket{\phi}$  and a quantum circuit $C_q^\prime$, but note in the actual construction (\Cref{QiO:gottesman-chuang}), the state $\ket{\phi}$ is an empty register. Here, we consider only the case of \emph{statistical} security. Notable here is the indistinguishability property is required to hold not only for equivalent quantum circuits, but also for \emph{approximately} equivalent quantum circuits. Also, contrary to \Cref{def:QiO}, we only require the indistinguishability for large values of~$n$.

\begin{definition}
\label{def:aQiO}
Let $\mathcal{C}_Q$ be a polynomial-time family of reversible quantum circuits and let {\bf D} be a pseudo-distance. For $n\in\mathbb{N}$, let $C_{q^n}$ be the circuits in $\mathcal{C}_Q$ of input length $n.$
A  polynomial-time quantum algorithm for~$\mathcal{C}_Q$ is a \emph{statistically secure quantum indistinguishability obfuscator} ($qi\mathcal{O}_{\bf D}$) for $\mathcal{C}_Q$  \emph{with respect to {\bf D}} if the following conditions hold:

\begin{enumerate}
\item {\tt Functionality:} There exists a negligible function ${\tt negl}(n)$ such that for every $C_q\in C_{q^n}$
$$(\ket{\phi}, C_q^\prime)\leftarrow qi\mathcal{O}_{\bf D}(C_q)  \;  \mbox{ and }\;   \mathbf{ D}(C_q^\prime(\ket{\phi}, \cdot),C_{q}(\cdot))\leq {\tt negl}(n).$$
Where $\ket{\phi}$ is an $\ell$-qubit state, the circuits $C_q$ and $C_q^\prime$ are of type $(n,n)$ and $(m,n)$ respectively ($m= \ell +n$).\footnote{A circuit is of type $(i,j)$ if it maps $i$ qubits to $j$ qubits.}

\item  {\tt Polynomial Slowdown:} There exists a polynomial $p(n)$ such that for any $C_{q}\in C_{q^n},$
\begin{itemize}
\item  $\ell\leq p(|C_{q}|)$
\item $m \leq  p(|C_{q}|)$
\item $|C_{q}^\prime| \leq p(|C_{q}|).$
\end{itemize}

\item {\tt Statistically Secure Indistinguishability:} For any two \textbf{approximately equivalent} quantum circuits $C_{q_0},C_{q_1}\in C_{q^n},$ of the same size \textbf{and for large enough $n,$} the two distributions $qi\mathcal{O}_{\bf D}(C_{q_0})$ and $qi\mathcal{O}_{\bf D}(C_{q_1})$ are statistically indistinguishable.			
\end{enumerate}
\end{definition}

\subsection{$qi\mathcal{O}_{{\bf D}}$ for Circuits Close to the Gottesman-Chuang Hierarchy }
\label{sec:Gottesman-Chuang}

Here, we present a quantum indistinguishability obfuscation (\cref{def:QiO}) for a family of circuits that are approximately equivalent (\Cref{def:aqec}) with respect to the pseudo-distance ${\bf D}(U_1,U_2)=\frac{1}{\sqrt{2d^2}} ||U_1\otimes U_1^*- U_2\otimes U_2^*||_F \;$ (see \Cref{sec:norms}). There are two main ingredient in our construction, one is  Low's learning algorithm~\cite{Low09} (described below)
 and the second is~\Cref{lem:approx}.

In \cite{Low09} Low presents a learning algorithm that, given  oracle access to a unitary~$U$ and its conjugate $U^\dagger$ with the promise that the distance ${\bf D}(U, C)\leq \epsilon <\frac{1}{2^{k-1/2}}$  for some $C\in \mathcal{C}_k$ (\Cref{sec:gottesman-chuang}), outputs a circuit $C_q$ for computing $C$ with probability at least $1-\delta$ with $$O\left(\frac{1}{{\epsilon^\prime}^2} (2n)^{k-1} \log\left(\frac{(2n+1)^{k-1})}{\delta}\right)\right)$$ queries. Where $\epsilon^\prime:=\sqrt{2(1-(2^{k-1}\epsilon)^2}-1>0$ and ${\bf D}(U_1,U_2)=\frac{1}{\sqrt{2d^2}} ||U_1\otimes U_1^*- U_2\otimes U_2^*||_F$ is the pseudo-distance defined in \Cref{sec:norms}.

Based on Low's work, we construct an quantum indistinguishability obfuscation $qi\mathcal{O}_{{\bf D}}$ with respect to this pseudo-distance ${\bf D}$ for circuits that are very close to $\mathcal{C}_k.$ Note that the run-time of Low's algorithm is exponential in $k.$ Moreover, the algorithm becomes infeasible if $\epsilon^\prime$ is very small. Therefore, to ensure that our construction in \Cref{QiO:gottesman-chuang} runs in polynomial-time we set $k$ to be some fixed positive integer and $\epsilon \leq {\tt negl}(n)<\frac{1}{2^{k-1/2}}$ for all $n.$ Note if $\epsilon<\frac{1}{2^{k-1/2}},$ then $ \epsilon^\prime \geq \frac{\sqrt{7}}{2}-1.$

\begin{lemma}
\label{lem:approx}
Let $U$ and $C$ be unitaries. If the distance ${\bf D}(U,C)<\frac{1}{2^{k-1/2}}$ for some $C\in \mathcal{C}_k$, then $C$ is unique up to phase.
\end{lemma}
\begin{proof}
See~\cite{Low09}.
\end{proof}

\begin{theorem}\label{th:approx}
Let $\mathcal{C}_Q=\{U_{q^{n,k}} \mid n \in \mathbb{N} \mbox{ and } k \mbox{ is fixed positive integer} \},$ be a polynomial-time family of reversible quantum circuits. Here, $U_{q^{n,k}}$ denotes the $n$-qubit circuits for which there exists a negligible function ${\texttt{negl}}(n)$ such that for any $U_q\in U_{q^{n,k}},$ there exists a $C_q\in\mathcal{C}_k$  that satisfies ${\bf D}(U_q, C_q)< {\tt negl}(n)<\frac{1}{2^{k+1/2}}.$ Then \Cref{QiO:gottesman-chuang} is a statistically-secure quantum indistinguishability obfuscation for $\mathcal{C}_Q$ with respect to~${\bf D}.$
\end{theorem}

\begin{algorithm}[]
   \caption{$qi\mathcal{O}$-Gottesman-Chuang}
   \label{QiO:gottesman-chuang}
  \begin{itemize}
\item Input: An $n$-qubit circuit $U_q\in U_{q^{n,k}},$  $k$  and $\delta={\tt negl}(n)).$
 \begin{enumerate}
 \item  From $U_q$ compute the circuit $U_q^\dagger.$
  \item Using Low's approximate learning algorithm on inputs $U_q$ and ${U_q}^\dagger$ compute the circuit $C_q $~\cite{Low09}.
  \item Output the circuit $C_q.$
\end{enumerate}
  \end{itemize}
\end{algorithm}

\begin{proof} We have to show that \Cref{QiO:gottesman-chuang} satisfies \Cref{def:aQiO}.
\begin{enumerate}

\item  {\tt Functionality:}  On input $U_q\in U_{q^{n,k}}$, let $C_q$ be the output of \Cref{QiO:gottesman-chuang}. By assumption (\Cref{th:approx}) there exists a unitary $C \in \mathcal{C}_k$ such that ${\bf D}(U_q, C)< {\tt negl}(n)<\frac{1}{2^{k+1/2}}.$ From \Cref{lem:approx}, $C$ is unique up to a global phase. Therefore with overwhelming probability, Low's approximate learning algorithm will output a circuit $C_q$ that computes~$C$. We have ${\bf D}(U_q, C_q)={\bf D}(U_q, C_q)< {\tt negl}(n),$ therefore $U_q$ and $C_q$ are approximately equivalent.

\item {\tt Polynomial Slowdown:} The total  cost of \Cref{QiO:gottesman-chuang} is the cost of computing the circuit $U_q^\dagger$ from $U_q$ plus the cost of Low's learning algorithm. Clearly, we can compute $U_q^\dagger$ from $U_q^\dagger$ in polynomial-time. Using Low's algorithm (for parameters defined in \Cref{th:approx} and setting $\delta={\tt negl}(n)$) we can learn $C_q$ with probability at least $1-{\tt negl}(n)$ in time at most $$O\left(n^{k-1} \left[(k-1)\log((2n+1))-\log({\tt negl}(n))\right]\right).$$ Which is at most a polynomial in $n$  (since $k$ is a constant). Therefore, \Cref{QiO:gottesman-chuang} runs in polynomial-time.

\item {\tt Statistically Indistinguishability:} Let $U_q, U_q^\prime\in U_{q^{n,k}}$ be two circuits such that $${\bf D}(U_q, U_q^\prime)<{\tt negl}(n).$$
By assumption (\Cref{th:approx}), there exist unitaries $C,C^\prime\in \mathcal{C}_k$ such that ${\bf D}(U_q, C)<{\tt negl}(n)<\frac{1}{2^{k+1/2}}$ and  ${\bf D}(U_q^\prime, C^\prime)<{\tt negl}(n)<\frac{1}{2^{k+1/2}}.$ Using the triangle inequality we can easily show that $C$ and $C^\prime$ are equivalent circuit (up to a global phase) for large $n.$

$${\bf D}(U_q, C^\prime)\leq {\bf D}(U_q, U_q^\prime)+{\bf D}(U_q^\prime, C^\prime)\leq {\tt negl}(n)+\frac{1}{2^{k+1/2}}$$
$$\Longrightarrow {\bf D}(U_q, C^\prime)<\frac{1}{2^{k+1/2}}+\frac{1}{2^{k+1/2}}<\frac{1}{2^{k-1/2}}$$

According \Cref{lem:approx}  $C^\prime$ is unique (up to a global phase) such that ${\bf D}(U_q, C^\prime)<\frac{1}{2^{k-1/2}},$ But $C$ also satisfies ${\bf D}(U_q, C)<\frac{1}{2^{k-1/2}}.$ It follows that $C$ and $C^\prime$ are equivalent circuits (up to a global phase). Moreover, the outputs of Low's algorithm $C_q$ and $C_q^\prime$ on any two equivalent unitaries (up to global phase) is statistically indistinguishable~\cite{Low09}. Therefore, for any  $U_q, U_q^\prime\in U_{q^{n,k}},$ the two distributions $qi\mathcal{O}_{\bf D}(U_q)$ and $qi\mathcal{O}_{\bf D}(U_q^\prime)$ are statistically indistinguishable.	
\qedhere
\end{enumerate}
\end{proof}

\section*{Acknowledgements}
We thank an anonymous reviewer for pointing out the work of~\cite{Low09}; we would also like to thank Yfke Dulek for related discussions.
This material is based upon work supported by the Air Force Office of Scientific Research under award number
 FA9550-20-1-0375,
 Canada's   NFRF and NSERC, an Ontario ERA, and the University of Ottawa’s Research Chairs program.

\appendix
\section{Size of Coefficients in the Update Functions}
\label{sec:appendix}
 \label{coeff:size}
Here, we prove a Lemma that is used in \Cref{sec:1-qubit}.
\begin{lemma}\label{lem:beta:size}
Let $C_q$ be an  $n$-qubit  $poly(n)$ size quantum circuit with $O(\log(n))$ number of gates and  $F_{C_q}$ be the corresponding update function
\begin{equation*}
\begin{aligned}
&F_{C_q}: \{0,1\}^{2n}\longrightarrow  (\mathbb{C} \times \{0,1\}^{2n})^{min(k,n)},\\
 &\vspace{0.25cm}(a_1,b_1,\ldots, a_n,b_n) \mapsto \left((\beta_1, {\bf s}_1),\dots, (\beta_{4^k}, {\bf s}_{4^k})\right).
 \end{aligned}
 \end{equation*}
  then there exists a polynomial $p(\cdot)$ such $\beta_i \in O(p(n))$ for all $i\in[4^k].$
\end{lemma}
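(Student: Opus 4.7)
\medskip

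\noindent\textbf{Proof proposal.} The plan is to track how each coefficient $\beta_i$ is built up as the random Pauli mask $\xgate^{b_1}\zgate^{a_1}\otimes\cdots\otimes\xgate^{b_n}\zgate^{a_n}$ is commuted from the right of $C_q$ to its left, gate by gate, and to show that at every step the only arithmetic that can increase the bit-complexity of the coefficients is a multiplication by an element of $\set{S}:=\{0,1,\tfrac{1+i}{2},\tfrac{1-i}{2}\}$, which is triggered \emph{only} by a $\tgate$-gate (cf.\ \Cref{eq:pgate2}). Since the hypothesis bounds the total number of gates (and in particular the $\tgate$-count) by $k\in O(\log|C_q|)$, the bit-complexity blowup will be at most $O(k)=O(\log|C_q|)$ per coefficient.

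\medskip

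\noindent\textbf{Key steps.} First I would set up an invariant for an intermediate stage of the push-through procedure: after processing the last $j$ gates of $C_q$, the correction operator has the form
\[
\sum_{\ell=1}^{N_j} \gamma_{\ell}^{(j)} P_{\ell}^{(j)}\,,
\]
where each $P_\ell^{(j)}$ is an $n$-qubit tensor product of single-qubit Paulis from $\{\igate,\xgate,\zgate,\xgate\zgate\}$ and each $\gamma_\ell^{(j)}$ is a finite sum of products of elements of $\set{S}$. Next I would verify the inductive step by cases on the $(j{+}1)$-st gate. For a Clifford gate, the update rules in \Cref{update function} act linearly on the Pauli labels only; the coefficients $\gamma_\ell^{(j)}$ are unchanged (the Pauli commutation introduces at most a global phase, which we absorb). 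For a $\tgate$-gate on some wire $w$, \Cref{eq:pgate2} replaces each term $\gamma_\ell^{(j)}P_\ell^{(j)}$ by the sum of two terms whose coefficients are $\gamma_\ell^{(j)}\cdot\tfrac{1+i}{2}$ and $\gamma_\ell^{(j)}\cdot\tfrac{1-i}{2}$ (or keeps $\gamma_\ell^{(j)}$ unchanged if the mask bit on wire $w$ is $0$). Thus each $\tgate$-gate at most doubles the number of terms and multiplies each coefficient by one element of $\set{S}$.

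\medskip

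\noindent Now I would collect coefficients. After collapsing, each final $\beta_i$ is a sum of at most $2^k$ products, each product being a product of at most $k$ factors from $\set{S}$. Every factor of $\set{S}$ can be written as $c/2$ with $c\in\{0,2,1+i,1-i\}$, so every such product is of the form $z/2^k$ with $z\in\mathbb{Z}[i]$ and $|z|\le(\sqrt{2})^k\le 2^{k/2}$. Summing at most $2^k$ such products yields $\beta_i = z_i/2^k$ with $z_i\in\mathbb{Z}[i]$ and $|z_i|\le 2^k\cdot 2^{k/2}=2^{3k/2}$. Hence $\beta_i$ admits an exact representation using $O(k)$ bits for its real part and $O(k)$ bits for its imaginary part. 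Since $k\in O(\log|C_q|)$, this bit-length is in $O(\log|C_q|)$, which is certainly $O(p(|C_q|))$ for a suitable polynomial~$p$, as required.

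\medskip

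\noindent\textbf{Main obstacle.} The delicate point is verifying the inductive invariant when a Clifford gate, and in particular a $\cnot$, is applied in between several $\tgate$-gates: the $\cnot$ can mix Pauli labels across wires, and one must check that this mixing is carried out without altering the magnitude or denominator of the existing coefficients~$\gamma_\ell^{(j)}$ but only permuting/XOR-ing the Pauli exponents. Once this is pinned down, the bit-size argument is a straightforward count of multiplications of elements of the four-element set $\set{S}$ along the push-through, and the bound follows.
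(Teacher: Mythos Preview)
Your proposal is correct and follows essentially the same approach as the paper: both arguments isolate the observation that Clifford gates leave the coefficients untouched while each $\tgate$ multiplies existing coefficients by an element of $\set{S}=\{0,1,\tfrac{1+i}{2},\tfrac{1-i}{2}\}$, and then bound the bit-size of the resulting products and sums. Your bookkeeping via the Gaussian-integer representation $\beta_i=z_i/2^k$ with $|z_i|\le 2^{3k/2}$ is in fact tidier than the paper's case analysis of powers $\bigl(\tfrac{1\pm i}{2}\bigr)^m$, and yields the sharper conclusion that each $\beta_i$ needs only $O(k)=O(\log|C_q|)$ bits rather than merely polynomial in $|C_q|$.
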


\begin{proof}
The following map is an isomorphism between $\mathbb{C}$ and $\mathbb{R}^2$
 \begin{equation}
  \label{size:map:real-complex)}
  f:\mathbb{C}\leftarrow \mathbb{R}^2, \; (a+b_i)\mapsto (a,b)
\end{equation}

Note that there is a one-to-one map between $F_{C_q}$ and the corresponding unitary $U_{F_{C_q}}=\sum_{i=1}^{4^k} \beta_i \xgate^{b_{i_1}} \zgate^{a_{i_1}}\otimes \cdots \otimes \xgate^{b{i_n}} \zgate^{a_{i_n}},\;  \; k\leq n.$ Note that any Clifford gate can only affect the correction bits in unitaries of type $U_{F_{C_q}},$ but will have no effect on the coefficients $\beta_i.$ So the coefficients can be affected by $\tgate$ gates. Therefore, to estimate the size of the coefficients we can ignore other gates. Each $\beta_i$ is constructed by adding and multiplying numbers from the set $\left\{0,1, \frac{1+i}{2},\frac{1-i}{2}\right\}.$ We note the following relationships between $\frac{1+i}{2},\frac{1-i}{2}$
\begin{equation*}
 \left(\frac{1+i}{2}\right)\pm \left(\frac{1-i}{2}\right)=\pm1.
\end{equation*}
If $ m=2\ell+1$ and  $\ell\in\mathbb{N},$ then
 \begin{equation*}
 \hspace{4cm}\left(\frac{1\pm i}{2}\right)^m=\left(\frac{a}{2}\right)^\ell,\;  a\in\{\pm 1,\pm i\}.
\end{equation*}
Else if  $m=2\ell$ and $\ell\in\mathbb{N}\cup \{0\},$ then
\begin{equation*}
 \left(\frac{1\pm i}{2}\right)^m=\left(\frac{a}{2}\right)^\ell \left(\frac{1\pm i}{2}\right),\; a\in\{\pm 1,\pm i\}
\end{equation*}

Therefore, we can represent $\left(\frac{1\pm i}{2}\right)^m$ in $O(m)$ bits  and $\left(\frac{1+ i}{2}\right)^{m} \left(\frac{1- i}{2}\right)^{m}$ in $O(m)$ bits. Of course $1^m=1$ and adding $1$ to itself is $m.$ For each application of $\tgate$, a coefficient will multiply and add at most polynomial time in the circuit size and there are $O(\log(n))$ such gates, therefore there exists a polynomial $p(\cdot)$ such that $|\beta_i|\in O(p(n)),$ for every $i\in[4^k].$ 
\end{proof}

\addcontentsline{toc}{section}{References}

\bibliographystyle{bst/arxiv_no_month}
\bibliography{bib/full,bib/quantum,bib/more}
\end{document}